\long\def\comment#1{\ifdim\overfullrule>0pt{\sf[{#1}]}\fi}
\newtheorem{theorem}{Theorem}
\newtheorem{definition}{Definition}
\newtheorem{claim}{Claim} 
\newtheorem{lemma}{Lemma}
\newtheorem{fact}{Fact}
\newtheorem{corollary}{Corollary}
\newcommand{\optLP}{{${OPT}_{LP}(G)$}}
\newcommand{\namedref}[2]{\hyperref[#2]{#1~\ref*{#2}}}
\newcommand{\alert}[1]{\textbf{\color{red}
[[[#1]]]}\marginpar{\textbf{\color{red}**}}\typeout{ALERT:
\the\inputlineno: #1}}
\newenvironment{cproof}
{\begin{proof}
 [Proof.]
 \vspace{-1.5\parsep}
}
{\renewcommand{\qed}{\hfill $\Diamond$} \end{proof}}
\begin{document}
\bibliographystyle{alpha}
\def\proofend{\hfill$\Box$\medskip}
\def\Proof{\noindent{\bf Proof:\ \ }}
\def\Sketch{\noindent{\bf Sketch:\ \ }}
\def\eps{\epsilon}
\def\xa{x_{max}}
\def\xi{x_{min}}
\def\xij{x_{min}(j)}
\def\xaj{x_{max}(j)}
\def\fij{f_{min}(j)}
\def\faj{f_{max}(j)}
\def\x{x}

\title{An improved analysis of the M\"omke-Svensson algorithm for
  graph-TSP on subquartic graphs\thanks{A preliminary version of these
results (with a worse approximation ratio) appeared in the
proceedings of the European Symposium on Algorithms 2014.}}

\author{Alantha Newman\thanks{CNRS-Universit\'e Grenoble Alpes and G-SCOP,
  F-38000 Grenoble, France.  
Supported in part
 by LabEx PERSYVAL-Lab (ANR--11-LABX-0025).  Email: {{\tt
      firstname.lastname@grenoble-inp.fr}.} 
}}

\maketitle

\begin{abstract}
M\"omke and Svensson presented a beautiful new approach for
the traveling salesman problem on a graph metric (graph-TSP), which
yields a $4/3$-approximation guarantee on subcubic graphs as
well as a substantial improvement over the $3/2$-approximation
guarantee of Christofides' algorithm on general graphs.  The crux of
their approach is to compute an upper bound on the minimum cost of a
circulation in a particular network, $C(G,T)$, where $G$ is the input
graph and $T$ is a carefully chosen spanning tree.  The cost of this
circulation is directly related to the number of edges in a tour
output by their algorithm.  Mucha subsequently improved the analysis
of the circulation cost, proving that M\"omke and Svensson's algorithm
for graph-TSP has an approximation ratio of at most $13/9$ on
general graphs.

This analysis of the circulation is local, and vertices with degree
four or five can contribute the most to its cost.  Thus,
hypothetically, there could exist a subquartic graph (a graph with
degree at most four at each vertex) for which Mucha's analysis of the
M\"omke-Svensson algorithm is tight.  We show that this
is not the case and that M\"omke and Svensson's algorithm for
graph-TSP has an approximation guarantee of at most $25/18$ on
subquartic graphs.  To prove this, we present different methods to
upper bound the minimum cost of a circulation on the network $C(G,T)$.
Our approximation guarantee holds for all graphs that have an
optimal solution for a standard linear programming relaxation of
graph-TSP with subquartic support.
\end{abstract}

\section{Introduction}

The metric traveling salesman problem (TSP) is one of the most
well-known problems in the field of combinatorial optimization and
approximation algorithms.  Given a complete graph, $G=(V,E)$, with
nonnegative edge weights that satisfy the triangle inequality, the
goal is to compute a minimum cost tour of $G$ that visits each vertex
exactly once.  Christofides' algorithm yields a tour with cost no more
than $3/2$ times that of an optimal tour~\cite{christofides1976worst}.
It remains a major open problem to improve upon this approximation
factor.

In the past few years, there have been many exciting developments
relating to {\em graph-TSP}.  In this setting, we are given an
unweighted graph $G=(V,E)$, and the goal is to find the shortest tour
that visits each vertex {\em at least} once.  This problem is
equivalent to the special case of metric TSP where the shortest path
distances in $G$ define the metric.  It is also equivalent to the
problem of finding a connected, spanning, Eulerian multigraph in $G$ with the
minimum number of edges.

A promising approach to improving upon the factor of $3/2$ for metric
TSP is to round a linear programming relaxation known as the Held-Karp
relaxation~\cite{held1970traveling}.  A lower bound of $4/3$ on its
integrality gap can be demonstrated using a family of graph-TSP
instances.  Since it is widely conjectured that its integrality gap is
also upper bounded by $4/3$, proving this for graph-TSP would be a
step towards a more comprehensive understanding of the relaxation and
would hopefully provide insights applicable to metric TSP.  However,
even in this special case of metric TSP, graph-TSP had also long
resisted significant progress before the recent spate of results.

\subsection{Recent progress on graph-TSP}\label{sec:prog}

In 2005, Gamarnik, Lewenstein and Sviridenko presented an algorithm
for graph-TSP on cubic, 3-edge-connected graphs with an approximation
factor of $3/2-5/389$~\cite{gamarnik2005improved}, thus proving that
Christofides' approximation guarantee of $3/2$ is not the best
possible factor for this class of graphs.  Their approach is based on
finding a cycle cover for which they can upper bound the number of
components.  This general approach was also taken by Boyd, Sitters,
van der Ster and Stougie who
combined it with polyhedral ideas to obtain approximation guarantees
of $4/3$ for cubic graphs and $7/5$ for subcubic graphs (i.e., graphs
with degree at most three at each vertex)~\cite{boyd2011tsp}.  Shortly
afterwards, Gharan, Saberi and Singh proved that a subtle modification
of Christofides' algorithm has an approximation guarantee of
$3/2-\eps_0$ for graph-TSP on general graphs, where $\eps_0$ is a
fixed constant with value approximately
$10^{-12}$~\cite{gharan2011randomized}.

M\"omke and Svensson then presented a beautiful new approach for
graph-TSP, which resulted in a substantial improvement over the
$3/2$-approximation guarantee of
Christofides~\cite{momke2016removing}.  Their approach
also leads to a simple $\frac{4}{3}$-approximation algorithm 
for subcubic graphs.  We will discuss their algorithm in more detail
in Section \ref{sec:MS}, since our paper is directly based on their
approach.  Ultimately, they were able to prove an approximation
guarantee of $1.461$ for graph-TSP.  Mucha subsequently gave an
improved analysis, thereby proving that M\"omke and Svensson's
algorithm for graph-TSP actually has an approximation ratio of at most
$13/9$~\cite{mucha2014frac}.  Seb\H{o} and Vygen introduced an
approach for graph-TSP based on ear decompositions and matroid
intersection, which incorporated the techniques of M\"omke and
Svensson, and improved the approximation ratio to $7/5$, where it
currently stands~\cite{sebHo2012shorter}.  For the special case of
$k$-regular graphs, Vishnoi gave an algorithm for graph-TSP with an
approximation guarantee that approaches 1 as $k$
increases~\cite{vishnoi2012permanent}.

\subsection{M\"omke-Svensson's approach to graph-TSP}\label{sec:MS}

Christofides' algorithm for graph-TSP finds a spanning tree of the
graph and adds to it a $J$-join, where $J$ is the set of vertices that
have odd degree in the spanning tree.\footnote{A $J$-join of a graph
  $G=(V,E)$ is a subgraph $F \subset E$ such that the degree of each
  vertex in $J \subseteq V$ is odd in $F$ and the degree of each
  vertex not in $J$ is even (and can be zero) in $F$.  An {\em
    odd-join} of $G$ is a $J$-join where $J = V$.  For example, a
  perfect matching in a cubic graph is an odd-join.}  Since the
spanning tree is connected, the resulting subgraph is clearly
connected, and since the $J$-join corrects the parity of the spanning
tree, the resulting subgraph is Eulerian.  In contrast, the approach
of M\"omke and Svensson is based on removing an odd-join of the graph,
which yields a possibly disconnected Eulerian subgraph.  Thus, to
maintain connectivity, one must double, rather than remove, some of
the edges in the odd-join.  The key step in proving the approximation
guarantee of the algorithm is to show that many edges will actually be
removed and relatively few edges will be doubled, resulting in a
connected, Eulerian subgraph with few edges.  First, M\"omke and
Svensson design a circulation network, $C(G,T)$, which is constructed
based on the input graph $G$, an optimal solution for a linear
programming relaxation for graph-TSP, and a carefully chosen spanning
tree $T$ (see Appendix A.1 in \cite{momke2016removing}).  Using
techniques of Naddef and Pulleyblank~\cite{naddefpulleyblank}, M\"omke
and Svensson show how to sample an odd-join of size $|E'|/3$, where
$E'$ is the edge set of a 2-vertex-connected subgraph of $G$, chosen
via the circulation network $C(G,T)$.  After removing this odd-join
from $G$, the number of doubled edges that are
added back to guarantee connectivity (while maintaining the Eulerian
property) is
directly related to the minimum cost of a circulation of $C(G,T)$.
Lemma 4.2 from \cite{momke2016removing} relates this cost to the size
of a solution output by their algorithm.
\begin{lemma} {\bf{\cite{momke2016removing}}}\label{lem:MS}
Given a 2-vertex-connected graph $G$ and a depth-first-search (DFS) tree $T$
of $G$, let $z^*$ be a circulation for $C(G,T)$ with cost $c(z^*)$.
Then there is a spanning Eulerian multigraph in $G$ with at
most $\frac{4}{3}n + \frac{2}{3} c(z^*)$ edges.
\end{lemma}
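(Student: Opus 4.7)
The plan is to decode the optimal circulation $C^*$ into a spanning connected multigraph $H$ of $G$ with $|H| = 2(n-1) + c(C^*)$, and then apply the Naddef--Pulleyblank sampling technique to remove an odd-join $J \subseteq E(H)$ with $|J| \leq |H|/3$, leaving a connected Eulerian spanning multigraph of size at most $\frac{2}{3}|H| \leq \frac{4}{3}n + \frac{2}{3}c(C^*)$.

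The first step is structural: start with the doubled tree $2T$ as a connected Eulerian multigraph with $2(n-1)$ edges, and use the flow values of $C^*$ to perform local swaps that replace pairs of tree edges along an ancestor--descendant path by two copies of the corresponding back edge. Because $T$ is a depth-first-search tree, every non-tree edge is a back edge with a well-defined tree path, so these swaps are unambiguous and the resulting $H$ remains spanning and connected. The cost function on $C(G,T)$ should be tuned so that each unit of cost in $C^*$ corresponds to exactly one edge in $H$ beyond the $2(n-1)$ we start with, giving $|H| = 2(n-1) + c(C^*)$.

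The second step is the odd-join argument: $H$ decomposes into closed walks (each back edge paired with the tree path it shortcuts), and one shows that on this decomposition there exists an odd-join $J$ of total size at most $|H|/3$ whose removal preserves connectivity while correcting the parity at every vertex. Arithmetic then finishes the proof: $|H|-|J| \le \frac{2}{3}\bigl(2(n-1)+c(C^*)\bigr) = \frac{4}{3}(n-1) + \frac{2}{3}c(C^*) \le \frac{4}{3}n + \frac{2}{3}c(C^*)$.

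The main obstacle is the first step. Making precise the correspondence between circulations in $C(G,T)$ and connected spanning multigraphs of $G$ of the right size requires careful accounting of flow on tree arcs, back arcs, and the ``doubling'' arcs; one must verify both directions, namely that every feasible integer circulation yields a connected multigraph $H$ and that the cost of $C^*$ equals $|H|-2(n-1)$. This is where 2-vertex connectivity of $G$ and the DFS property of $T$ enter crucially: they guarantee enough back edges to perform the necessary swaps and a clean ancestor--descendant structure so that local swaps compose without destroying connectivity. Once the multigraph $H$ and its cycle decomposition are in hand, the odd-join sampling is a direct application of the Naddef--Pulleyblank-style argument already used by M\"omke and Svensson.
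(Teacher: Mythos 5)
This lemma is cited from M\"omke and Svensson rather than proved in the paper, so there is no in-paper proof to match against; but the exposition surrounding it (Section 1.2 and the proof of Lemma \ref{lem:four}) already shows where your sketch breaks. The crux of your second step --- that there exists an odd-join $J$ with $|J|\le|H|/3$ whose plain \emph{removal} preserves connectivity --- does not hold in general and is not what M\"omke and Svensson prove. Deleting an odd-join typically disconnects a graph; their mechanism is to sample (via Naddef--Pulleyblank) a random odd-join of $T\cup S$ in which each edge appears with probability $1/3$, and then, edge by edge, to \emph{remove} a join edge if it is paired in a ``removable pairing'' and to \emph{double} it otherwise. The circulation cost $c(C^*)$ is exactly the price of the doubling branch: it counts the surplus back edges at each expensive vertex that cannot all be matched to distinct removable tree edges (compare the cost function in Lemma \ref{lem:four}). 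The expected tour size is $\frac{4}{3}|E(T\cup S)|-\frac{2}{3}|R|$ with $R$ the removable set, which unwinds to $\frac{4}{3}(n-1)+\frac{2}{3}c(C^*)$ --- the same number you reach, but by a different mechanism than $\frac{2}{3}|H|$ for a pre-built $H$.

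Your first step is also off-model. $C(G,T)$ is a fixed network (Section \ref{sec:stc}); you cannot retune its cost so that $c(C^*)$ counts surplus edges in a swap-built $H$. Moreover, exchanging two copies of a tree path in $2T$ for two copies of the shortcutting back edge keeps the multigraph Eulerian and does not increase its edge count, so such an $H$ has at most $2(n-1)$ edges and no odd-degree vertices --- there is nothing for an odd-join to correct and no place for $c(C^*)$ to appear additively. The graph that actually enters M\"omke and Svensson's argument is the 2-vertex-connected $T\cup S$ with $(n-1)+|S|$ edges, and the doubling happens only afterward, on the non-removable edges of the sampled odd-join.
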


We discuss the circulation network $C(G,T)$ further in Section
\ref{sec:prelim}.  For the moment, we emphasize that if one can prove
a better upper bound on the value of $c(z^*)$, then this directly
implies an improved upper bound on the number of edges in a tour
output by M\"omke and Svensson's algorithm.

\subsection{Our contribution}

We consider the graph-TSP problem for {\em subquartic} graphs (i.e.,
graphs in which each vertex has degree at most four).  The best-known
approximation guarantee for these graphs is inherited from the general
case, even when the graph is 4-regular, and is therefore $7/5$ due to
Seb\H{o} and Vygen.  For subquartic graphs, we give an improved upper
bound on the minimum cost of a circulation for $C(G,T)$.  Using Lemma
\ref{lem:MS}, this leads to an improved approximation guarantee of
$25/18$ for graph-TSP on these graphs.  Before we give an overview of
our approach, we first explain our motivation for studying graph-TSP
on this restricted class of graphs.

As mentioned in Section \ref{sec:prog}, graph-TSP is now known to be
approximable to within $4/3$ for subcubic graphs.  So, on the one
hand, trying to prove the same guarantee for subquartic graphs is
arguably a natural next step.  Additionally, it is a well-motivated
problem to study the graph-TSP on sparse graphs, because the support
of an optimal extreme point solution for the standard linear
programming relaxation (reviewed in Section \ref{sec:LP}) has at most
$2n-1$ nonzero edges (see Theorem 4.9 in
\cite{cornuejols1985traveling}).  Thus, any graph that corresponds to
the support of such an optimal solution for the standard linear
program has average degree less than four.

However, our actual motivation for studying graphs with degree at most
four has more to do with understanding the M\"omke-Svensson algorithm
than with an abstract interest in subquartic graphs.  The basic
approach to computing an upper bound on the minimum value of $C(G,T)$
used in both \cite{momke2016removing} and \cite{mucha2014frac} is to
specify flow values on the edges of a particular circulation network
that are functions of an optimal solution for the linear programming
relaxation for graph-TSP on the graph $G$.  The cost of the
circulation obtained using these values can be analyzed in a local,
vertex by vertex manner.  Mucha showed that vertices with degree four
or five potentially increase the cost of the circulation the
most~\cite{mucha2014frac}.  In fact, one could hypothetically
construct a tight example for Mucha's analysis of the M\"omke-Svensson
algorithm on a graph where each vertex has degree at most four (or
where each vertex has degree at most five).  It therefore seems
worthwhile to determine if the cost of the circulation can be improved
on subquartic graphs.  Our results actually hold for a slightly more
general class of graphs than subquartic graphs: they hold for any
graph that has an optimal solution for the standard linear programming
relaxation of graph-TSP with subquartic support.

\subsection{Organization}

In Section \ref{sec:LP}, we discuss the standard linear programming
relaxation for graph-TSP, and in Section \ref{sec:stc}, we discuss the
circulation network $C(G,T)$ and how it can be used to find a short
tour.  In Section \ref{sec:lpn}, we show that if, for a subquartic
graph, the optimal solution for the linear program has value equal to
the number of vertices in $G$, then the network $C(G,T)$ has a
circulation with cost zero, implying that the M\"omke-Svensson algorithm
has an approximation ratio of $4/3$.  This observation provides us
with some intuition as to how one might attempt to design a better
circulation for general subquartic graphs.

In Section \ref{sec:gen}, we describe three different methods to
obtain feasible circulations.  In Section \ref{sec:x-circ}, we detail
the method used by M\"omke and Svensson and Mucha, which becomes
somewhat simpler in the special case of subquartic graphs.  This
method directly uses values from the optimal solution for the linear
program to obtain a feasible solution for the circulation network.  In Section
\ref{sec:f-circ}, we present a new method that ``rounds'' the values
from the optimal solution for the linear program.  The latter
circulation alone leads to an improved analysis over $13/9$ for
subquartic graphs, but it does not improve on the best-known guarantee
of $7/5$.  However, as we show in Section \ref{sec:combine}, if we
take the best of these two circulations, we can show that at least one
of the circulations will lead to an approximation guarantee of at most
$46/33$.  Next, in Section \ref{sec:h-circ} we consider a third method
based on extreme point structure to upper bound the optimal cost of a
circulation.  Combining all three analyses, we obtain an approximation
guarantee of $25/18$.

We remark that our notation differs from that in
\cite{momke2016removing} and \cite{mucha2014frac}, even
though we are using exactly the same circulation network and we use
their approach for obtaining the feasible circulation described in
Section \ref{sec:x-circ}.  This different notation allows us
to more easily analyze the tradeoff between the different
circulations.

\section{Preliminaries: Notation and definitions}\label{sec:prelim}

For $S \subset V$, let $\delta(S) \subset E$ denote the set of edges
with exactly one endpoint in $S$.  For $S_1, S_2 \subset V$ such that
$S_1$ and $S_2$ are disjoint, let $(S_1,S_2)$ denote the edges with
exactly one endpoint in $S_1$ and the other endpoint in $S_2$.
Throughout this paper, we make use of the following well-studied
linear programming relaxation for graph-TSP.

\subsection{Linear program for graph-TSP}\label{sec:LP}

For a graph $G=(V,E)$, the following linear program is a relaxation of
graph-TSP.  We refer the reader to Section 2 of \cite{momke2016removing}
for a discussion of its derivation and history.
\begin{align*}
& \min  \sum_{e \in E} y_e \\
y(\delta(S)) & \geq   2  ~~ {\text{for }} \emptyset \neq S \subset V,  \tag{$LP(G)$}\label{LP} \\
y & \geq  0. 
\end{align*}
We denote the feasible region of this linear program by \ref{LP}, and
we denote the value of an optimal solution by \optLP.  If $x \in$
\ref{LP} and the sum of coordinates of $x$ equals \optLP, then we say
that $x$ is an {\em optimal solution for \eqref{LP}}.  Let $n$ denote
the number of vertices in $V$.

We want to exploit certain properties of an extreme point of \eqref{LP}
such as the fact that it has sparse support.  
An extreme point $x^* \in $ \ref{LP} has at most $2n-1$ edges (see
Theorem 4.9 in \cite{cornuejols1985traveling}).  In fact, we can
obtain a more accurate bound in terms of the number of {\em heavy}
vertices.

\begin{definition}\label{def:heavy}
A vertex $v \in V$ is called {\em heavy} with respect to $y \in $
\ref{LP} if $y(\delta(v)) > 2$.  Let $H(y)$ denote the set of
heavy vertices in $V$ with respect to $y$.
\end{definition}

It is known that the number of nonzero edges in the support of an
extreme point $x^* \in $ \ref{LP} is at most the number of tight
constraints in a maximal laminar family, which can have size at most
$2n-|H(x^*)|-1$, since out of the tight constraints in a maximal laminar
family, at most $n-1$ of them can be attributed to tight sets
containing of two or more vertices~\cite{cornuejols1985traveling}.

For the sake of
simplicity, we also want to use an optimal solution $x \in $ \ref{LP}
such that $x \leq 1$. The following lemma shows that we can find such
an optimal solution (but it might not be an extreme point).

\begin{lemma}\label{lem:xisone}
Let $G=(V,E)$ be a 2-edge-connected graph.  
Then there exists $x \in$ \ref{LP}, $x \leq 1$ such
that $\sum_{e \in E} x_e =$ \optLP.
\end{lemma}

\begin{proof}
Let $x \in $ \ref{LP} be an extreme point such that $\sum_{e \in E} x_e =$
\optLP.  Suppose that there is some $x_e > 1$ for $e \in E$. First, we
observe that edge $e$ must belong to some tight cut (i.e., there
exists $S \subset V$ such that $e \in \delta(S)$ and $x(\delta(S)) =
2$).  Since $G$ is 2-edge-connected, each cut crossing edge $e$ must
include at least one other edge.  If $x_e$ did not belong to any tight
cut, then we could decrease the value of $x_e$ and obtain a smaller
solution, which is a contradiction to the optimality of $x$.

Next, we show that edge $e$ can be in at most one tight cut.
Towards a contradiction, suppose that $e$ belongs to at least two tight
cuts.  Consider the cuts $(S \cup A, V \setminus{(S \cup A)})$ and $(S \cup B,
V\setminus{(S \cup B)})$, where $S, A$ and $B$ are disjoint and
$x(\delta(S \cup A)) = x(\delta(S \cup B)) = 2$.  Suppose
that $e=ij$ and $i \in S$ and $j \in V\setminus{(S \cup A \cup B)}$.
Then edge $e$ crosses both these cuts (i.e., $e \in \delta(S \cup A)$
and $e \in \delta(S \cup B)$).  
\begin{eqnarray*}
\delta(S \cup A) & = & (S,V\setminus{(S \cup A \cup B)}) + (S,B) +  (A, B)
+ (A, V\setminus{(S \cup A \cup B)}),\\
\delta(S \cup B) & = &  (S,V\setminus{(S \cup A \cup B)}) + (S,A) + (A,B) +
(B, V\setminus{(S \cup A \cup B)}).
\end{eqnarray*}
Then we have
\begin{eqnarray}
x(\delta(S \cup A)) + x(\delta(S \cup B)) & = & 2 \cdot x(S,
V\setminus{(S \cup A
  \cup B})) + 2 \cdot x(A,B)  + x(B,S) + x(A,S) \label{eq:new}\\ && + ~ x(A, V
\setminus{(S \cup A \cup B)}) + x(B, V\setminus{(S \cup A \cup
  B)}).\nonumber
\end{eqnarray}
Since both of these cuts are tight and since $x_e > 1$ and $e \in
(S, V \setminus{(S \cup A \cup B)})$, it follows that
\begin{eqnarray*}
x(\delta(S\cup A)) + x(\delta(S\cup B)) - 2\cdot x(S,
V\setminus{(S \cup A\cup B)}) & < & 2.
\end{eqnarray*}
By \eqref{eq:new}, this implies that
\begin{eqnarray}
2\cdot x(A,B) +  x(B,S)  +  x(A,S) + x(A, V\setminus{(S \cup A \cup B)}) +
x(B, V\setminus{(S \cup A \cup B)}) ~ < ~ 2. \label{eq:new2}
\end{eqnarray}
However, we know that $x(A\cup B, ~V\setminus{(A\cup B)}) \geq 2$.
\begin{eqnarray}
x(A\cup B, V\setminus{(A \cup B)}) & = & x(A,S) + x(A,
V\setminus{(S \cup A \cup B)}) \label{eq:new3} \\ && +~ x(B,S)
+ 
x(B, V\setminus{(S \cup A \cup B)} ).\nonumber
\end{eqnarray}
Since the quantity in \eqref{eq:new3} is at most the quantity on the
left-hand side of \eqref{eq:new2}, it must be strictly less than 2,
which is a contradiction.  We can conclude that the edge
$e$ occurs in at most one tight cut.

Therefore, the only case to consider is when $e$ belongs to exactly
one tight cut.  Since there is at least one other edge (call it $f$)
besides $e$ crossing this cut (since $G$ is 2-edge-connected) and
$x_f$ must have value strictly less than 1, we can increase $x_f$ and
decrease $x_e$.  Since the cut is still tight, the solution is still
feasible and has the same value as the original solution.  Observe
that if $x_e = 1 + \eta$, we can simply decrease $x_e$ to 1 and
increase $x_f$ by $\eta$.  Indeed, if this increase of $x_e$ and
decrease of $x_f$ results in another cut, which is crossed by edge $e$,
becoming infeasible (i.e., having value strictly less than 2), then there
is some value $\eta'$ (for $0 < \eta' < \eta$) such that after
decreasing $x_e$ by $\eta'$ and increasing $x_f$ by $\eta'$, we have
two tight cuts crossed by edge $e$.  Since this cannot happen by our
previous arguments, we conclude that we can simply decrease $x_e$ to 1
and increase $x_f$ by $\eta$.

Thus, for each edge $e$ with $x_e > 1$, we can simply find a tight cut
(which is a minimum cut and we can enumerate the minimum cuts in
polynomial time), decrease $x_e$ to 1 and increase another edge so
that this cut remains tight.  Since we can do this at most $|E|$
times, we claim that this procedure can be performed efficiently.
\end{proof}

We can actually assume that $G$ is 2-vertex-connected (see Lemma 2.1
from \cite{momke2016removing}).  We can also assume that there is an
optimal extreme point $x^* \in$ \ref{LP} whose support is $E$. (If
not, we can restrict $G$ to the support of $x^*$, which does not
increase the value \optLP.  Should the resulting graph not be
2-vertex-connected, we can break it up into 2-vertex-connected
components, find an optimal extreme point solution for each component
and repeat.)  Thus, we have the following corollary of Lemma
\ref{lem:xisone}.

\begin{corollary}\label{cor:x}
Let $G=(V,E)$ be a 2-vertex-connected graph.  Suppose there exists an
extreme point $x^* \in $ \ref{LP} with support $E$ such that $\sum_{e
  \in E} x_e^* =$ \optLP.  Then there is an $x \in $ \ref{LP} such
that the following properties hold:

\begin{enumerate}[(i)]

\item $\sum_{e \in E} x_e = \sum_{e \in E} x^*_e =$ \optLP.

\item The support of $x$ and the support of $x^*$ are the same.

\item The support of $x$ and $x^*$ contains $|E| \leq 2n-|H(x^*)|-1$ edges.

\item $x \leq 1$.

\end{enumerate}
\end{corollary}

For the rest of this paper, we assume that $G=(V,E)$ is
2-vertex-connected and that $x^*$ is an optimal extreme point of
\eqref{LP} with support $E$.  Moreover, we fix $x \in$ \ref{LP} to have
the properties stated in Corollary \ref{cor:x}.  We will refer to the
set of values $\{x_e\}$ for $e \in E$ as $x$-values.  Let $\sum_{e \in
  E} x_e =$ \optLP $= (1+\eps)n$ for some $\eps$, where $0 \leq \eps
\leq 1$.

\begin{definition}
The {\em excess} $x$-value $\eps(v)$ at a vertex $v$ is the amount by
which the total value on the incident edges exceeds 2 (i.e., $\eps(v) =
x(\delta(v)) -2$).
\end{definition}

The following fact will be useful in our analysis.
If \optLP $= (1+\eps)n$, then
\begin{eqnarray*}
\sum_{v \in V} \x(\delta(v)) & = & \sum_{v \in V} (2 + \eps(v)) ~~ =
~~ 2 (1 + \eps)n.
\end{eqnarray*}
This implies, 
\begin{eqnarray}
\sum_{v \in V} \eps(v) & = & 2 \eps n.\label{eps-vertices}
\end{eqnarray}

\subsection{Spanning trees and circulations}\label{sec:stc}

Let us recall some useful definitions from the approach of M\"omke and
Svensson~\cite{momke2016removing} that we use throughout this
paper.

\begin{definition}\label{def:greedy-tree}
Let $y \in $ \ref{LP}.  A {\em greedy DFS tree} chosen with respect to $y$ is
a spanning tree formed via a depth-first search of $G$.  If there is a
choice as to which edge to traverse next, the edge with the highest
$y$-value is chosen.
\end{definition}

For a given graph $G$ and a solution $y \in $ \ref{LP}, let $T$ denote
a greedy DFS tree with respect to $y$.  Suppose $T$ has root $r$, and
let $E(T)$ denote the edges in $T$ (i.e., tree edges).  We orient
$E(T)$ to be an arborescence with root $r$, and we orient $B(T):= E
\setminus{E(T)}$ ``backwards," that is, so that each edge in $B(T)$
forms a directed cycle with a path of the tree.  This is possible
since $T$ is a DFS tree.  We use the notation $(i,j)$ to denote an
edge directed from $i$ to $j$.  Note that once we have fixed a tree
$T$, all edges in $E$ can be viewed as directed edges.  When we wish
to refer to an undirected edge in $E$, we use the notation $ij \in E$.
With respect to the greedy DFS tree $T$, we have the following
definitions.

\begin{definition}\label{def:internal}
An {\em internal} node in $T$ is a vertex that is neither the root of
$T$ nor a leaf in $T$.  We use $T_{int}$ to denote this subset of
vertices.
\end{definition}

\begin{definition}\label{def:branch}
A {\em branch} vertex in $T$ is a vertex with at least two outgoing
tree edges.
\end{definition}

Note that the root of $T$ cannot be a branch vertex since $G$ is
2-vertex-connected.

\begin{definition}\label{def:expensive}
An {\em expensive} vertex is a vertex in $T_{int}$ with two incoming
edges that belong to $B(T)$.  We use $T_{exp}$ to denote this subset
of vertices.
\end{definition}

As we will see in Lemma \ref{lem:four}, in a subquartic graph,
expensive vertices are the only vertices (besides the root) that can contribute to the
cost of $C(G,T)$ (which is of interest due to Lemma \ref{lem:MS}).
For the sake of simplicity, we sometimes ignore the contribution of
the root in our calculations, since the contribution of the root to
the cost of $C(G,T)$ is negligible (at most 2).

\begin{fact}\label{exp:bound} 
The number of expensive vertices is
bounded as follows: $|T_{exp}| \leq |B(T)|/2 \leq n/2$.
\end{fact}

\begin{lemma}\label{lem:bex}
If $G$ is subquartic, then a branch vertex in $T_{int}$ is not expensive.
\end{lemma}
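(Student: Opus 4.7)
The plan is to prove \lemmaref{lem:bex} by a direct degree count at the purported vertex. Suppose for contradiction that some $v \in T_{int}$ is both a branch vertex and an expensive vertex; I will show that $v$ must have degree at least $5$ in $G$, contradicting the assumption that $G$ is subquartic.

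First I would enumerate the distinct edges incident to $v$ using the orientation defined in \sectionref{sec:stc}, where $T$ is oriented as an arborescence rooted at $r$ (tree edges go from parent to child) and every back edge in $B(T)$ is oriented from a descendant to an ancestor so as to close a directed cycle with a tree path. Because $v \in T_{int}$, $v$ is neither the root nor a leaf, so $v$ has a parent in $T$, which contributes exactly one incoming tree edge at $v$. Because $v$ is a branch vertex, it has at least two outgoing tree edges (its children). Because $v$ is expensive, by definition it has two incoming edges from $B(T)$, i.e., two back edges whose head is $v$.

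Next I would argue these five edges are pairwise distinct. The two outgoing tree edges are distinct from each other (they go to different children) and from the incoming tree edge from the parent (they have the opposite orientation at $v$ and are tree edges rather than parent edges in the multiset sense). The two incoming back edges are distinct from each other by definition of expensive, and they cannot coincide with any of the three tree edges since back edges lie in $B(T) = E \setminus T$. Hence the degree of $v$ in $G$ satisfies
\[
\deg_G(v) \;\geq\; 1 + 2 + 2 \;=\; 5.
\]

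This directly contradicts the subquartic hypothesis that every vertex of $G$ has degree at most four, so no such $v$ exists, completing the proof. There is no real obstacle here: the only subtlety worth double-checking is that the definition of ``expensive'' refers to incoming back edges (not arbitrary back edges), which combined with the one incoming tree edge and the two outgoing tree edges of a branch vertex gives five edges all incident to $v$.
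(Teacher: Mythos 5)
Your proof is correct and follows essentially the same degree-count argument as the paper, just spelled out in more detail: the paper's one-line proof says a branch vertex can have at most one incoming back edge (since two outgoing tree edges plus one incoming tree edge already use three of the four available slots), which is precisely the contradiction you derive.
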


\begin{proof}
In a graph with vertex degree at most four, a branch vertex can have
at most one incoming back edge and therefore cannot be expensive.
\end{proof}

\begin{definition}
A {\em tree cut} is the partition of the vertices of the tree $T$
induced when we remove an edge $(u,v) \in T$.
\end{definition}

For each edge $(i,j) \in B(T)$, let $b(i,j) \leq 1$ be a nonnegative value.

\begin{definition}
Consider a tree cut corresponding to edge $(u,v) \in T$, and remove all
back edges $(w,u) \in B(T)$, where $w$ belongs to the subtree of $v$
in $T$.  We say that the remaining back edges that cross this tree cut
{\em cover} the cut.  If the total $b$-value of the edges that cover
the cut is at least 1, then we say that this tree cut is {\em
  satisfied by $b$}.
\end{definition}

We extend this definition to the vertices of $T$.
\begin{definition}
A vertex $v$ in $T$ is {\em satisfied by $b$} if for each incident
outgoing edge in $T$, the corresponding tree cut is satisfied by $b$.
On the other hand, if there is at least one incident outgoing edge
whose corresponding tree cut is not satisfied by $b$, then the vertex
$v$ is {\em unsatisfied by $b$}.
\end{definition}

M\"omke and Svensson define a circulation network,
$C(G,T)$ (see Appendix A.1 of \cite{momke2016removing}), and use the
cost of a feasible circulation to upper bound the length of a TSP tour
in $G$.  (See Lemma \ref{lem:MS}.)\footnote{In the journal version,
 M\"omke and Svensson give a linear program
  $LP(G,T)$ (\cite{momke2016removing}, page 11) whose objective value
  is equal to the minimum cost of a circulation in $C(G,T)$.  We
  choose to use the notation $C(G,T)$ to avoid confusion between this
  optimization problem and the linear program \eqref{LP} defined in
  Section \ref{sec:LP}.}

\begin{lemma}\label{lem:four}
Let $G$ be a subquartic graph and $T$ be a DFS tree.  Let $b:
B(T) \rightarrow [0,1]$.  If each internal vertex in $T$ is satisfied
by $b$, then there is a feasible integer circulation of $C(G,T)$ whose
cost (not including the contribution of the root) is
upper bounded by the following function:
\begin{eqnarray}
\sum_{j \in T_{exp}} \max \left\{ 0, ~\left(\sum_{i:(i,j) \in B(T)}
b(i,j)\right) -1\right\}.\label{cost-func-main}
\end{eqnarray}
\end{lemma}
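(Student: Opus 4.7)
The plan is to construct a feasible circulation on $C(G,T)$ explicitly from the values $\{b(i,j)\}$ and then verify that its cost is bounded by \eqref{cost-func-main}. The construction proceeds in two stages: fix the flow on the back-edge arcs to match $b$, and then propagate to the tree arcs via flow conservation from the leaves of $T$ upward.

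First, I would route $b(i,j)$ units of flow on the arc of $C(G,T)$ corresponding to each back edge $(i,j)\in B(T)$. This lies within capacity because $b(i,j)\le 1$ and the back-edge arcs in $C(G,T)$ admit at least one unit. Next, processing the tree in reverse BFS order, I would set the flow on the parent tree arc of each non-root vertex to be the unique value required by flow conservation at that vertex. Summing these conservation equations over the subtree $T_v$ rooted at $v$, the parent-arc flow of $v$ is a simple function of the quantity $\sum b(w,u)$ where $(w,u)\in B(T)$ is a back edge covering the tree cut at the parent edge of $v$. This is precisely the quantity controlled by the hypothesis: since $v$ is satisfied by $b$, the covering back-edge $b$-value is at least 1, which ensures the induced tree-arc flow meets the lower capacity bound that $C(G,T)$ associates with tree arcs.

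Second, I would bound the cost. In $C(G,T)$, cost is incurred only where a vertex must emit ``extra'' flow to reconcile conservation beyond what the standard tree/back arcs allow at zero cost. For an internal vertex $v$ that is \emph{not} expensive, the subquartic assumption forces $v$ to have at most one incoming back edge, so $\sum_{i:(i,v)\in B(T)} b(i,v) \le 1$; combined with the single parent-arc flow, conservation at $v$ is met without any cost-bearing arc carrying flow. For $v$ an expensive vertex, the two incoming back edges may carry total $b$-value exceeding 1, and this excess must be absorbed by a cost-bearing arc. By \lemmaref{lem:bex}, an expensive vertex is never a branch vertex and so has exactly one outgoing tree arc, which fixes where the surplus is routed; the unavoidable cost at $j$ is exactly $\max\{0,(\sum_{i:(i,j)\in B(T)} b(i,j))-1\}$. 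Summing over $j\in T_{exp}$ yields \eqref{cost-func-main}; the root contributes negligibly as noted in the preamble to the fact about $|T_{exp}|$.

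The step I expect to be the main obstacle is the simultaneous verification of \emph{both} capacity bounds on each tree arc of $C(G,T)$. The lower bound is the easy direction and follows directly from the satisfaction hypothesis. The upper bound is more delicate: I must argue that the inductively computed tree-arc flow never overshoots its feasible interval. This is where the subquartic assumption is essential, bounding the total back-edge $b$-flow incident at any internal vertex so that flow conservation does not force an infeasible value on the parent arc. Once this capacity verification is complete, the cost accounting collapses cleanly onto the set $T_{exp}$ because only an expensive vertex can have strictly more than one unit of incoming back-edge $b$-value to dispose of.
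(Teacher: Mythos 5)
Your proposal lands on the same three load-bearing facts the paper uses: (1) only expensive vertices can contribute to the cost because a non-expensive internal vertex has at most one incoming back edge, hence $\sum_i b(i,v)\le 1$; (2) by \lemmaref{lem:bex} an expensive vertex in a subquartic graph is not a branch vertex, so M\"omke--Svensson's special cost rule for branch vertices never fires and the simplified cost function \eqref{cost-func-main} is valid; and (3) ``each internal vertex satisfied by $b$'' is exactly the statement that every unit-demand tree arc of $C(G,T)$ is covered. The paper, however, does not build the circulation from scratch: it treats the lemma as a \emph{translation} of M\"omke--Svensson's known framework into the $b$-value language and defers the network mechanics to their Section~4, only verifying the new ingredient (the branch-vertex simplification under subquarticity). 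Your plan instead re-derives the circulation by pushing $b$ onto back arcs and propagating to tree arcs via conservation, which is a fine intuition but requires the precise arc set, capacities, and cost structure of $C(G,T)$ --- something neither you nor the paper spells out, so your version is not actually more rigorous than the paper's ``one can verify'' phrasing.

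One concrete correction: the obstacle you single out --- verifying an \emph{upper} capacity bound on tree arcs --- is not a real obstacle. In $C(G,T)$ the tree arcs carry a lower bound of one unit and no finite upper bound; any surplus of incoming back flow at a vertex is routed through dedicated cost arcs, not choked by a tree-arc capacity. So your worry is unfounded, and flagging it as ``the main obstacle'' suggests you are imagining a different network than the one being used. Also, your claim that ``the subquartic assumption forces a non-expensive $v$ to have at most one incoming back edge'' conflates cause and definition: having at most one incoming back edge \emph{is} the definition of non-expensive; what subquarticity buys is \lemmaref{lem:bex} (expensive $\Rightarrow$ not a branch vertex), which is what lets \eqref{cost-func-main} replace M\"omke--Svensson's more elaborate branch-vertex accounting.
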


\begin{proof} 
The purpose of the circulation network $C(G,T)$ in
\cite{momke2016removing} is to find a subset of back edges, $B'
\subseteq B(T)$, so that $T\cup B'$ is 2-vertex-connected and so that
the following cost function 
(stated here for a subquartic graph) 
is minimized:
\begin{eqnarray}
\sum_{j \in T_{exp}} \max \left\{ 0, ~\left(\sum_{i:(i,j) \in B'} 1
\right) -1\right\}. \label{int-cost-func}
\end{eqnarray}

If $b': B(T) \rightarrow \{0,1\}$ is an integral function and every
internal vertex in $T$ is satisfied by $b'$, then the edges with
$b'$-value 1 form a set $B'$ such that $T\cup B'$ is
2-vertex-connected.  To obtain an upper bound on the cost of such a
2-vertex-connected subgraph, we can use any fractional values $b:B(T)
\rightarrow [0,1]$ such that each internal vertex in $T$ is satisfied
by $b$.  This is made explicit in the formulation of $LP(G,T)$ and the
subsequent discussion on page 11 of \cite{momke2016removing}; observe
that these fractional $b$-values are a feasible solution for
$LP(G,T)$.  In the case of subquartic graphs, the only vertices that
can contribute to the objective function in $LP(G,T)$ (and therefore
to the cost function \eqref{cost-func-main}) are the expensive
vertices, since the maximum value allowed on an edge is 1.  Thus, in
this special case, the simplified cost function \eqref{cost-func-main}
is equivalent to the objective function for $LP(G,T)$ and hence for
the cost of the circulation for
the network $C(G,T)$ used in \cite{momke2016removing}.
For a formal description of $C(G,T)$ and for the equivalence between
the objective function of $LP(G,T)$ and the cost of a circulation of
$C(G,T)$, we refer the reader to the proof of Lemma 4.1 in
\cite{momke2016removing}.
\end{proof}

From Lemma \ref{lem:MS}, we see that if we find a circulation with cost
zero for $C(G,T)$, then $G$ has a TSP tour of length at most
$\frac{4}{3}n$.  We note that in \cite{mucha2014frac}, it is shown
that for a graph $G$ for which \optLP $= n$, there is a circulation with
cost at most $\frac{n}{6}$ for $C(G,T)$, which results in a TSP tour
of length at most $\frac{13}{9}n$.

\subsection{Circulations from feasible LP solutions}

How do we choose a function $b : B(T) \rightarrow [0,1]$ so that every
internal vertex is satisfied by $b$?  One natural approach is to begin
with a feasible solution $y \in $ \ref{LP}.  The following useful
lemma shows that certain vertices will be satisfied by $y$.

\begin{lemma}\label{lem:tight-vertex}
For $y \in $ \ref{LP}, $ y \leq 1$, let $T$ be a greedy DFS tree
chosen with respect to $y$, and let $y(i,j) = y_{ij}$ for all back
edges in $B(T)$.  Then for any $v \in T_{int}$, $v$ is satisfied by
$y$ if $y(\delta(v)) = 2$ or if $v$ is a nonbranch vertex with no
outgoing back edge (e.g., $v$ is an expensive vertex if $G$ is subquartic).
\end{lemma}

\begin{proof}
First, we prove that $v$ is satisfied by $y$ if $y(\delta(v)) = 2$.
Let $(v,t)$ be an outgoing tree edge from vertex $v$.  Partition the
edges of $\delta(v)$ into the following two sets: the first set $E_1$
consists of edge $(v,t)$ and the incoming back edges from the subtree
rooted at $t$.  Let $E_2$ denote the remaining edges in $\delta(v)$.
Observe that at least one of the sets $E_1$ and $E_2$ has total
$y$-value at most 1, since these two sets form a partition of
$\delta(v)$.  Let $B' \subset B(T)$ denote the back edges that cover
the tree cut corresponding to $(v,t)$.  Observe that both $B' \cup
E_1$ and $B' \cup E_2$ are edge sets that cross cuts in $G$ (which we
assume to be the support of $y$).  In other words, let $V_t$ denote
the vertices in the subtree of $T$ rooted at $t$ (including vertex
$t$).  Then $B' \cup E_1 = \delta(V_t)$ and $B' \cup E_2 = \delta(V_t
\cup v)$.  Since either $E_1$ or $E_2$ has total $y$-value at most 1,
we can conclude that $B'$ has $y$-value at least 1, and therefore the
tree cut corresponding to $(v,t)$ is satisfied by $y$.  By repeating
this argument for each outgoing tree edge (in the case that $v$ is a
branch vertex), we can conclude that $v$ is satisfied by $y$.

Now let us consider the case in which $v$ is a nonbranch vertex with
no outgoing back edge (i.e., all back edges are incoming).  Then the
edge set $E_2$ consists of a single edge and we can conclude that the
$y$-value of the edges in $B'$ is at least 1.
\end{proof}

The following lemma pertains to vertices not satisfied by a feasible
solution $y \in $ \ref{LP}.

\begin{lemma}\label{lem:tree-cut-sat}
For a subquartic graph $G$, and $y \in $ \ref{LP}, $ y \leq 1$, let
$T$ be a greedy DFS tree chosen with respect to $y$ and let $y(i,j) =
y_{ij}$ for all back edges in $B(T)$.  Then for any $u \in T_{int}$,
$u$ has at most one outgoing tree edge whose corresponding tree cut is
not satisfied by $y$.
\end{lemma}

\begin{proof} 
Let $(u,v)$ be a tree edge such that there is no back edge coming into
vertex $u$ from the subtree of $T$ rooted at $v$.  Then observe that
since the $y$-value of edge $(u,v)$ is at most 1, the total $y$-value
of the edges that cover this tree cut must be at least $1$, since the
covering edges plus the edge $(u,v)$ have $y$-value at least 2 (by the
constraints in \eqref{LP}).  Thus, the tree cut corresponding to edge
$(u,v)$ must be satisfied by $y$.

If $u$ has only one outgoing tree edge, then the lemma holds.  If $u$
is a branch vertex with two outgoing tree edges, there is at most one
incoming back edge to vertex $u$.  Thus, the tree cut corresponding to
at least one outgoing tree edge, say $(u,v)$, is satisfied by $y$,
since $u$ has no incoming back edge from the subtree rooted at $v$.
Moreover, if $u$ has three outgoing tree edges, then there is no
incoming back edge to vertex $u$ and therefore the tree cuts
corresponding to each of the three outgoing tree edges are satisfied
by $y$.
\end{proof}

\section{Subquartic graphs: \optLP $= n$}\label{sec:lpn}

We now show that in the special case when \optLP $=n$ and $G$ is
subquartic, there is a circulation with cost zero.  Recall the
$\{x_{ij}\}$ values from Corollary \ref{cor:x}, and assume that
$\sum_{ij \in E} x_{ij} = n$.  Note that if $|E| = n$, then each edge
in $E$ must have $x$-value 1.  Thus, $G$ is a Hamilton cycle.  If $|E|
> n$, then we can choose a greedy DFS tree $T$ with respect to $x$
such that each edge $ij \in E$ with $x$-value $x_{ij} = 1$ (a
``$1$-edge'') belongs to $T$.

\begin{lemma}\label{lem:one-edge}
If \optLP $=n$ and $|E| > n$, then there is a greedy DFS tree chosen
with respect to $x$ such that all $1$-edges belong to $E(T)$.
\end{lemma}

\begin{proof} Observe that 
a vertex with degree at least three can have at most one incident
$1$-edge, since the $x$-value at each vertex is exactly 2 when \optLP
$=n$ (i.e., $x(\delta(v)) = 2$ for all $v \in V$).  By the assumptions
in the lemma, there is some vertex, say $i$, with degree at least
three, which therefore has at most one incident $1$-edge.  Thus, we
choose $i$ to be the root of the greedy DFS tree.  If $i$ is incident
to a $1$-edge, then this $1$-edge belongs to the resulting tree by the
rules defining the construction of a greedy DFS tree.

Suppose that after we are done constructing the greedy DFS tree, there
is a back edge $(u,v)$ that has $x$-value 1.  Then when vertex $v$ was
visited in the depth-first search, it should have traversed this edge
as the next tree edge.  Otherwise, the edge it did traverse/add to the
tree also had an $x$-value of 1, which is a contradiction because as
an internal vertex with an incoming back edge, vertex $v$ has degree
at least three.\end{proof}

For the rest of Section \ref{sec:lpn}, let $T$ denote a greedy DFS
tree in which all $1$-edges are tree edges.

\begin{lemma}\label{lem:two-edge}
When \optLP $=n$, and each back edge $(i,j) \in B(T)$ is assigned
value $f(i,j) = 1/2$, then each vertex in $T_{int}$ is satisfied by
$f$.
\end{lemma}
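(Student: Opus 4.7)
The plan is to show that, for every internal vertex $v$ and every outgoing tree edge $(v,w)$, the associated tree cut has total $f$-value at least $1$. Since $f\equiv 1/2$ on $B(T)$, this amounts to exhibiting at least two covering back edges for each such cut. The engine of the proof will be a pair of LP cut constraints played against the vertex equality $x(\delta(v))=2$, which holds because $OPT_{LP}(G)=n$ combined with $x(\delta(v))\geq 2$ for all $v$ forces equality everywhere.

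Let $u$ be the parent of $v$ in $T$. Write $x_A$ for the total $x$-value on back edges from $T_w$ to $v$ (these are non-covering), and $\beta$ for the total $x$-value on the covering back edges (back edges from $T_w$ to proper ancestors of $v$). The first ingredient is the LP cut constraint $x(\delta(T_w))\geq 2$, which yields $\beta \geq 2-x_{vw}-x_A$.

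The second ingredient is the LP cut constraint $x(\delta(T_w\cup\{v\}))\geq 2$. The edges crossing this cut are $(u,v)$, the remaining outgoing tree edges of $v$, the back edges from $v$ up to proper ancestors of $v$, the back edges from other subtrees of $v$ landing at $v$, and the covering back edges themselves. Using $x(\delta(v))=2$ to eliminate the contributions of everything meeting $v$ other than $(v,w)$ and the non-covering back edges of $T_w$ to $v$, one obtains $\beta\geq x_{vw}+x_A$. Since the two lower bounds $2-x_{vw}-x_A$ and $x_{vw}+x_A$ sum to exactly $2$, at least one of them is at least $1$, so $\beta\geq 1$. Finally, Lemma~\ref{lem:one-edge} guarantees that every $1$-edge lies in $T$, so every back edge has $x$-value strictly less than $1$; hence $\beta\geq 1$ forces the number of covering back edges to be at least two, and the cut is satisfied by $f$.

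The step requiring the most care is the bookkeeping for $\delta(T_w\cup\{v\})$ when $v$ is a branch vertex, since one must correctly account for the back edges arriving at $v$ from its other subtrees. Once that accounting is verified, the subquartic hypothesis stated in the lemma is not in fact used in this argument: the proof goes through for any 2-vertex connected graph $G$ with $OPT_{LP}(G)=n$, once a greedy DFS tree $T$ containing all $1$-edges is fixed as provided by Lemma~\ref{lem:one-edge}.
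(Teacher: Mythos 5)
Your proof is correct and follows the same route as the paper's: both show that every tree cut is satisfied by the $x$-values (the paper cites ``the cut argument of M\"omke and Svensson'' for this, while you spell out the two-cut bookkeeping on $\delta(T_w)$ and $\delta(T_w\cup\{v\})$), and then observe that since no back edge is a $1$-edge by Lemma~\ref{lem:one-edge}, each cut is covered by at least two back edges, so $f\equiv 1/2$ satisfies it. Your side remark that the subquartic hypothesis is not actually used in this particular lemma is also accurate.
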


\begin{proof}
Set $x(i,j) = x_{ij}$ for each back edge.  Then each tree cut is
satisfied by $x$, because each vertex $v \in V$ has $x(\delta(v))=2$,
and we can therefore apply Lemma \ref{lem:tight-vertex}.  Since there
are no $1$-edges in the set of back edges, this implies that each tree
cut must in fact be covered by at least two edges.  Thus, setting
$f(i,j) = 1/2$ results in each tree cut being satisfied by $f$.
\end{proof}

We remark that Lemmas \ref{lem:one-edge} and \ref{lem:two-edge} hold
for general graphs.

\begin{lemma}\label{lem:three-edge}
If $G$ is subquartic and \optLP$=n$, setting $f(i,j) = 1/2$ for each
edge $(i,j) \in B(T)$ yields a circulation with cost zero.
\end{lemma}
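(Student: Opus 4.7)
The plan is to combine \lemmaref{lem:two-edge} with \lemmaref{lem:four} and then argue that, in the subquartic setting, the right-hand side of the cost expression \eqref{cost-func-main} vanishes term by term. By \lemmaref{lem:two-edge}, the assignment $f(i,j) = 1/2$ for every back edge already satisfies every vertex in $T_{int}$, which is exactly the hypothesis needed to invoke \lemmaref{lem:four}. Hence the minimum cost circulation of $C(G,T)$ is at most
\[
\sum_{j \in T_{exp}} \max \left\{ 0, ~\left(\sum_{i:(i,j) \in B(T)} f(i,j)\right) -1\right\},
\]
and, since circulation costs are nonnegative, it suffices to show that this upper bound equals $0$.

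The crux is therefore a short degree count: I will argue that in a subquartic graph every expensive vertex has exactly two incoming back edges. An expensive vertex $j$ lies in $T_{int}$, so it has exactly one incoming tree edge (from its parent) and at least one outgoing tree edge (to a child); by definition it also has at least two incoming back edges. This already accounts for four edges incident to $j$, and since $G$ is subquartic, $\deg(j) \le 4$. Consequently $j$ has exactly one incoming tree edge, exactly one outgoing tree edge, exactly two incoming back edges, and no outgoing back edges. Plugging $f \equiv 1/2$ into the inner sum gives $2 \cdot \tfrac{1}{2} - 1 = 0$, so each term of the outer sum is $\max\{0,0\} = 0$.

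Summing over $j \in T_{exp}$ yields an upper bound of $0$ on the circulation cost, and nonnegativity forces equality. The only subtlety worth flagging is the distinction between expensive and branch vertices: a branch vertex has two outgoing tree edges and could in principle upset the count above, but \lemmaref{lem:bex} guarantees that in subquartic graphs no branch vertex is expensive, so the degree accounting in the previous paragraph is never violated. There is no real obstacle here; the statement is essentially a verification that $1/2$ is the exact threshold at which the two incoming back edges of every expensive vertex in a subquartic graph just pay for themselves.
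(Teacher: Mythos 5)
Your argument is essentially the same as the paper's: in a subquartic graph every internal vertex has at most two incoming back edges (an expensive vertex has exactly two, as your degree count shows), so with $f \equiv 1/2$ the inner sum $\sum_{i:(i,j)\in B(T)} f(i,j)$ never exceeds $1$, and the cost function \eqref{cost-func-main} vanishes term by term once Lemmas~\ref{lem:two-edge} and~\ref{lem:four} are in place. The one small point you omit, which the paper addresses in a parenthetical, is the root of $T$: it lies outside $T_{exp}$ and therefore outside the sum in \eqref{cost-func-main}, yet it can contribute $1/2$ fractionally to the circulation cost; the paper notes that since the minimum circulation in $C(G,T)$ is integral, this fractional $1/2$ rounds down and the optimum cost is still zero.
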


\begin{proof}
This follows from the fact that each vertex in $T_{int}$ has in-degree
at most two and therefore the total $f$-value coming into a vertex is
at most one.  Thus, the circulation value is zero.  (Note that the
root can contribute $1/2$ to the circulation, but there exists a
minimum cost circulation that is integral, and its cost will therefore
still be zero.)
\end{proof}

\begin{theorem}
If $G$ is subquartic and \optLP $=n$, then $G$ has a TSP tour of
length at most $4n/3$.
\end{theorem}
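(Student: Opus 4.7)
The plan is to derive the theorem directly from the lemmas already established in Section~\ref{sec:lpn} together with Lemma~\ref{lem:MS}. The work has essentially been done; we only need to combine the pieces and handle a trivial boundary case.

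First I would split the argument into two cases according to the size of $E$. If $|E| = n$, then since each edge carries $x$-value at most $1$ and $\sum_e x_e = n$, every edge must have $x_e = 1$, and $G$ is a Hamilton cycle. A Hamilton tour then has length $n \leq \tfrac{4}{3}n$, so the conclusion is immediate. The interesting case is $|E| > n$, where we apply Lemma~\ref{lem:one-edge} to choose a greedy DFS tree $T$ whose tree edges contain every $1$-edge of $G$. This ensures that $B(T)$ consists entirely of edges $(i,j)$ with $x_{ij} < 1$, which is exactly the setting in which Lemma~\ref{lem:two-edge} guarantees that the assignment $f(i,j) = 1/2$ satisfies every internal vertex of $T$.

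Next I would invoke Lemma~\ref{lem:three-edge} to conclude that this fractional assignment $f$ corresponds to a feasible circulation of $C(G,T)$ of cost zero. The key observation backing Lemma~\ref{lem:three-edge} is already packaged into Lemma~\ref{lem:four}: for a subquartic graph, only expensive vertices can contribute to the cost function~\eqref{cost-func-main}, and each such vertex has exactly two incoming back edges, so the total incoming $f$-value is at most $2\cdot\tfrac12 = 1$, making each term in~\eqref{cost-func-main} zero. Hence the minimum circulation cost $c(C^*)$ in $C(G,T)$ satisfies $c(C^*) \leq 0$.

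Finally I would plug this bound into Lemma~\ref{lem:MS}: since $G$ is $2$-vertex connected and $T$ is a DFS tree, the lemma produces a spanning Eulerian multigraph with at most $\tfrac{4}{3}n + \tfrac{2}{3}c(C^*) \leq \tfrac{4}{3}n$ edges, and such a multigraph immediately yields a TSP tour of the same length. There is no genuine obstacle here; the only mild subtlety is checking the boundary case $|E|=n$ separately (so that Lemma~\ref{lem:one-edge} applies) and noting that the root contributes at most an integrality rounding issue to the circulation, which, as already observed in the proof of Lemma~\ref{lem:three-edge}, leaves the minimum circulation cost at zero.
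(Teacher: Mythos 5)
Your proof is correct and follows exactly the route the paper intends: handle $|E|=n$ as a trivial Hamilton cycle, otherwise invoke Lemma~\ref{lem:one-edge} to choose the tree, Lemmas~\ref{lem:two-edge} and~\ref{lem:three-edge} to certify a zero-cost circulation, and Lemma~\ref{lem:MS} to convert that into the $\frac{4}{3}n$ bound. You also correctly anticipate both the edge case the paper handles implicitly ($|E|=n$) and the integrality remark about the root's $1/2$ contribution, so there is nothing missing.
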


\section{Subquartic graphs: General case}\label{sec:gen}

In this section, we consider the general case of subquartic graphs.
For a subquartic graph $G=(V,E)$, suppose \optLP $= (1+\eps)n$ for
some $\eps > 0$.  Recall the $\{x_{ij}\}$ values from Corollary
\ref{cor:x}.  Let $T$ be a greedy DFS tree chosen with respect to $x$
(Definition \ref{def:greedy-tree}), and let $x(i,j) = x_{ij}$ for all
back edges in $B(T)$.  As shown in Lemma \ref{lem:four}, the only
vertices that can add to the cost function are the expensive vertices
(Definition \ref{def:expensive}).  We find the following terminology
convenient.

\begin{definition}
A vertex $v \in T_{int}$ that is satisfied by $x$ is called {\em
  LP-satisfied}.
\end{definition}

\begin{definition}
A vertex $v \in T_{int}$ that is not satisfied by $x$ is called {\em
  LP-unsatisfied}.
\end{definition}

The following corollaries follow from Lemma \ref{lem:tight-vertex}.

\begin{corollary}\label{lem:expensive}
An expensive vertex is LP-satisfied.
\end{corollary}

\begin{corollary}\label{lem:heavy}
An LP-unsatisfied vertex is heavy (i.e., it belongs to $H(x)$).
\end{corollary}

The reason we emphasize that an LP-unsatisfied vertex is heavy is that
we can use the excess $x$-value of this vertex to pay for the
increased value on an edge that covers the unsatisfied tree cut
corresponding to one of its incident outgoing edges so that this tree
cut becomes satisfied.  We also wish to use the excess $x$-value of an
expensive vertex to pay for some of its contribution to the cost
function incurred by the back edges coming into the vertex.  For each
vertex $v$, we want to use the quantity $\eps(v)$ at most once in this
payment scheme.  This will be guaranteed by the fact that
LP-unsatisfied vertices and expensive vertices are disjoint sets
(Corollary \ref{lem:expensive}).

\subsection{The $x$-circulation}\label{sec:x-circ}

In this section, we use the $x$-values to obtain an upper bound on the
cost of a circulation, essentially following the arguments of M\"omke
and Svensson~\cite{momke2016removing} and Mucha~\cite{mucha2014frac}.
We present the analysis here, since we refer to it in Section
\ref{sec:combine} when we analyze the cost of taking the best of two
circulations. Also, the arguments can be somewhat simplified due to
the subquartic structure of the graph, which is useful for our
analysis.

Recall that for each back edge $(i,j)$ in $B(T)$, we have $x(i,j) = x_{ij}$.
(For a vertex $j \in T_{int}\setminus{T_{exp}}$, we can actually set
$x(i,j) = 1$, since there is at most one incoming back edge to vertex
$j$ and this does not change the worst-case analysis.)

\begin{definition}
For each vertex $j \in T_{exp}$, let $\xij \leq \xaj$ denote the
$x$-values of the two incoming back edges to vertex $j$.  Let $c_x(j)
= \xij + \xaj -1 - \eps(j)$.
\end{definition}

We will show that there is a function $x': B(T) \rightarrow [0,1]$
such that each vertex in $T_{int}$ is satisfied by $x'$ and the cost
of the circulation can be bounded by
\begin{eqnarray}
\sum_{j \in T_{exp}} \max \left\{0, \left(\sum_{i: (i,j) \in B(T)}
x'(i,j)\right) - 1 \right\} & \leq & \sum_{j \in T_{exp}}
\max\{0,c_x(j)\} + \sum_{j \in T_{int}}\eps(j). ~\label{cost-func}
\end{eqnarray}

\begin{claim}\label{clm:fact}
For an expensive vertex $j \in T_{exp}$, the following holds:
\begin{eqnarray*}
2\cdot x_{max}(j) + \xi(j) \leq 2 + \eps(j).
\end{eqnarray*}
\end{claim}

\begin{cproof}
By the construction of $T$, we note that the $x$-value of the tree
edge leaving vertex $j$ must be at least $\xaj$.  Thus, the above
inequality holds.
\end{cproof}

\begin{claim}\label{clm:thirteen}
The value $c_x(j)$ can be upper bounded as follows:
\begin{eqnarray*}
c_x(j) & \leq & \frac{\xij}{2} - \frac{\eps(j)}{2} ~ \leq ~ 1 - \xij.
\end{eqnarray*}
\end{claim}

\begin{cproof}
For a vertex $j \in T_{exp}$, we can use Claim \ref{clm:fact} to show
\begin{eqnarray*}
c_x(j) ~= ~\xa(j) + \xi(j) - 1 - \eps(j) & \leq & (2 + \eps(j) -
\xi(j))/2 + \xi(j) -1 -\eps(j)\\
& = & \frac{\xi(j)}{2} - \frac{\eps(j)}{2}.
\end{eqnarray*}
Claim \ref{clm:fact} also implies
\begin{eqnarray*}
\frac{x_{min}(j)}{2} - \frac{\eps(j)}{2} & \leq & 1-\xaj ~ \leq ~ 1-\xij.
\end{eqnarray*}
\end{cproof}

\begin{lemma}\label{clm:fourteen}
For a vertex $j \in T_{exp}$, $c_x(j) \leq 1/3$.
\end{lemma}

\begin{proof}
By Claim \ref{clm:thirteen}, we have
\begin{eqnarray*}
c_x(j) & \leq & \min\left\{\frac{\xij}{2}, ~1-\xij\right\}.
\end{eqnarray*}
This implies that $c_x(j) \leq 1/3$, which occurs when $\xij = 2/3$,
as shown by Mucha~\cite{mucha2014frac}.
\end{proof}

To make the circulation feasible, we need to increase the $x$-values
of some of the back edges in $B(T)$ so that all of the LP-unsatisfied
vertices become satisfied.  By Corollary \ref{lem:heavy}, these
vertices are heavy.  Thus, we will use the ``extra'' $\eps(v)$ for an
LP-unsatisfied vertex $v$ to ``pay" for increasing the $x$-value on an
appropriate back edge.  For ease of notation, we now set $x'(i,j) :=
x(i,j)$ for all $(i,j) \in B(T)$.  We will update these $x'$-values so
that each LP-unsatisfied vertex is satisfied by $x'$.

Consider an LP-unsatisfied vertex $v \in T$ and let $(v,t)$ denote the
outgoing tree edge corresponding to the unsatisfied tree cut.  Let $B'
\subseteq B(T)$ be the set of back edges that cover the tree cut
corresponding to edge $(v,t)$.  Partition the edges in $\delta(v)$
into the following two sets: the first set $E_1$ consists of edge
$(v,t)$ and the incoming back edges from the subtree rooted at $t$.
Let $E_2$ denote the remaining edges in $\delta(v)$.  Note that
$x(E_1) + x(E_2) = 2 + \eps(v)$ and since $v$ is LP-unsatisfied,
$\eps(v) > 0$.  From this and the following facts
\begin{eqnarray*}
x(E_1) + x(B') & \geq & 2,\\
x(E_2) + x(B') & \geq & 2,
\end{eqnarray*}
we can conclude that
\begin{eqnarray*}
x(B') & \geq & 1 - \frac{\eps(v)}{2}.
\end{eqnarray*}

Let $(i,j) \in B'$ be an arbitrary edge in $B'$, which exists because
$G$ is $2$-vertex-connected.  We will update the value of $x'(i,j)$ as
follows: $$ x'(i,j) := \min\{1, ~x'(i,j) + \eps(v)/2\}.$$ We use this
recursive notation because a back edge's value can be increased
multiple times in the process of satisfying all LP-unsatisfied
vertices.  The following lemma follows by the construction of the
$x'$-values and by Corollary \ref{lem:expensive}.

\begin{lemma}\label{lem:fix-x}
The cost of satisfying all of the LP-unsatisfied vertices is at most
$\sum_{j \in T_{int}\setminus{T_{exp}}} \eps(j)/2$.  In other words,
\begin{eqnarray*}
\sum_{(u,v) \in B(T)} (x'(u,v) - x(u,v)) & \leq & \sum_{j \in
  T_{int}\setminus{T_{exp}}} \frac{\eps(j)}{2}.
\end{eqnarray*}
\end{lemma}

Since all vertices in $T$ are now satisfied by $x'$, the $x'$-values
can be used to compute an upper bound on the cost of a feasible
circulation of $C(G,T)$.

\begin{lemma}\label{lem:two}
The function $x': B(T) \rightarrow [0,1]$ corresponds to a feasible
circulation of $C(G,T)$ with cost (not including the contribution of
the root) at most
\begin{eqnarray*}
\sum_{j \in T_{exp}} \max\{0,~c_x(j)\} + \sum_{j \in T_{int}}\eps(j).
\end{eqnarray*}
\end{lemma}

\begin{proof}
By construction, every vertex in $T_{int}$ is satisfied by $x'$.
Thus, the $x'$-values correspond to a feasible circulation of
$C(G,T)$.  The cost of the circulation based on the $x'$-values is
\begin{eqnarray*}
\sum_{j \in T_{exp}} \max \left\{0, \left(\sum_{i: (i,j) \in B(T)}
x'(i,j)\right) - 1 \right\} & \leq & \sum_{j \in T_{exp}} \max
\left\{0, \left(\sum_{i: (i,j) \in B(T)} x(i,j)\right) - 1 \right\}
\\ 
&& + \sum_{(u,v) \in B(T)} (x'(u,v) - x(u,v)).
\end{eqnarray*}
We have
\begin{eqnarray*}
\sum_{j \in T_{exp}} \max \left\{0, \left(\sum_{i: (i,j) \in B(T)}
x(i,j)\right) - 1 \right\} & = & \sum_{j \in T_{exp}} \max \left\{0,
\xaj + \xij - 1 \right\}\\
& \leq & \sum_{j \in T_{exp}} \left( \max \left\{0, \xaj + \xij - 1 -
\eps(j) \right\} + \eps(j)\right) \\
& \leq & \sum_{j \in T_{exp}} \max\{0,~c_x(j)\} + \sum_{j \in
  T_{exp}}\eps(j).
\end{eqnarray*}
Combining the above inequality with Lemma \ref{lem:fix-x} proves the
lemma.\end{proof}

\begin{theorem}\label{thm:three}
When $G$ is subquartic and \optLP $= (1+\eps)n$, there is a
feasible circulation for $C(G,T)$ with cost at most $n/6 + 2\eps n + 2$.
\end{theorem}

\begin{proof}
The number of expensive vertices is at most $n/2$ (Fact
\ref{exp:bound}) and each expensive vertex $j \in T_{exp}$ can add at
most $1/3 + \eps(j)$ to the cost function (Lemma \ref{clm:fourteen}).
Each vertex $j \in T_{int}\setminus{T_{exp}}$ can add at most
$\eps(j)$ to the cost function.  Using the fact that $\sum_{j \in
  T_{int}} \eps(j) \leq 2 \eps n$ and the fact that the contribution
of the root is at most 2 yields the theorem.
\end{proof}

\subsection{The $f$-circulation}\label{sec:f-circ}

Now we describe a new method to obtain a feasible circulation: We show
how to obtain values $f'(i,j)$ for each edge $(i,j) \in B(T)$ such
that each vertex in $T_{int}$ is satisfied by $f'$.  The values will
be used to demonstrate an improved upper bound on the cost of a
circulation of $C(G,T)$ when $G$ is a subquartic graph.  In this
section, we will prove the following theorem, which implies that the
M\"omke-Svensson algorithm has an approximation guarantee of $17/12$
for graph-TSP on subquartic graphs.
\begin{theorem}\label{thm:f}
When $G$ is subquartic and \optLP $= (1+\eps)n$, there is a
feasible circulation for $C(G,T)$ with cost at most $n/8 + 2\eps n + 2$.
\end{theorem}
Consider a vertex $j \in T_{exp}$.  If both incoming back edges have
$f$-value $1/2$, then this vertex will not contribute anything to the
cost of the circulation.  Thus, on a high level, our goal is to find
$f$-values that are as close to $1/2$ as possible, while at the same
time not creating any additional unsatisfied vertices.  For example,
in Lemma \ref{clm:fourteen}, we saw that a vertex can contribute as
much as $1/3$ to the cost of the circulation, when $x_{min} = 2/3$.
If we could decrease $x_{max}$ and $x_{min}$, we could decrease the
cost of the $x$-circulation.  This might, in turn, cause some
LP-satisfied vertices to become unsatisfied.  We can avoid this
situation by strategically increasing some other $x$-values.  The
$f$-value therefore corresponds to a decreased $x$-value if the
$x$-value is high, and an increased $x$-value if the $x$-value is low.

The remaining issue that needs to be addressed is that the $f$-values
corresponding to decreased $x$-values may cause some of the
LP-unsatisfied vertices to become more unsatisfied than they were by
the $x$-values.  However, note that in Section \ref{sec:x-circ}, we
only used $\eps(j)/2$ to satisfy an LP-unsatisfied vertex $j$.  We can
actually use up to $\eps(j)$.  This observation allows us to decrease
the $x$-values while still ensuring that all vertices are satisfied.
We use the rules depicted in Figure \ref{fig:rules} to determine the
values $f: B(T) \rightarrow [0,1]$.
\begin{figure}[h!]
\begin{center}
\fbox{
\begin{tabular}{ccc}
$x_{ij} > 3/4$ & $\Rightarrow$ & $f(i,j) = 2 x_{ij} -1,$\\
$x_{ij} < 1/4$ & $\Rightarrow$ & $f(i,j) = 2 x_{ij},$\\
$1/4 \leq x_{ij} \leq 3/4$ & $\Rightarrow$ & $f(i,j) = 1/2.$
\end{tabular}}
\end{center}
\caption{Rules for constructing the $f$-values from the
  $x$-values.}\label{fig:rules}
\end{figure}

\begin{lemma}\label{lem:sat-by-f}
If a vertex $v$ is LP-satisfied, then it is satisfied by $f$.
\end{lemma}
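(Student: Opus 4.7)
The plan is to reduce the lemma to the following purely numerical claim: whenever a multiset of values $\{x_e\} \subseteq [0,1]$ satisfies $\sum_e x_e \geq 1$, the associated $f$-values (as defined by the rules in Figure~\ref{fig:rules}) satisfy $\sum_e f(x_e) \geq 1$. This suffices, because if $v$ is LP-satisfied then for each outgoing tree edge the back edges $S \subseteq B(T)$ covering the corresponding tree cut have $x(S) \geq 1$; applying the numerical claim to each such $S$ shows the cut is satisfied by $f$, and hence $v$ is satisfied by $f$.

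To prove the numerical claim, I would split $S$ into the \emph{large} edges $S_H = \{e \in S : x_e \geq 1/4\}$ and the \emph{small} edges $S_L = \{e \in S : x_e < 1/4\}$. From the three rules of Figure~\ref{fig:rules} I only use two consequences: $f(x_e) \geq 1/2$ whenever $e \in S_H$ (since $f = 1/2$ on $[1/4,3/4]$ and $f = 2x-1 \geq 1/2$ on $(3/4,1]$), and $f(x_e) = 2x_e$ whenever $e \in S_L$.

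The argument then splits on $|S_H|$. If $|S_H| \geq 2$, the two large edges already contribute at least $1$ to $\sum_e f(x_e)$. If $|S_H| = 0$, then $\sum_e f(x_e) = 2 \sum_e x_e \geq 2$. The only delicate case is $|S_H| = 1$: writing the large edge's value as $x_1$, we have $\sum_{e \in S_L} x_e \geq 1 - x_1$, so $\sum_e f(x_e) \geq f(x_1) + 2(1-x_1)$. A sub-case check on which piece of $f$ contains $x_1$ finishes: if $x_1 \in [1/4,3/4]$ then $f(x_1) = 1/2$, giving a bound $5/2 - 2x_1 \geq 1$, while if $x_1 \in (3/4,1]$ then $f(x_1) = 2x_1 - 1$, giving exactly $1$.

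The main obstacle is this $|S_H| = 1$ sub-case, which is exactly where tightness occurs. The underlying symmetry $f(x) + f(1-x) = 1$ built into the rules explains why: any pair of back edges whose $x$-values sum to $1$ produces $f$-values that also sum to $1$, and essentially every tight configuration reduces to such a ``mate pair,'' so the case analysis above captures precisely the worst case.
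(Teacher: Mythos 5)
Your proof is correct, and the reduction to the numerical claim is exactly the right move. The decomposition you use, however, is cleaner than the paper's. The paper first splits on a threshold of $1/2$: (a) at least two edges with $x$-value $\geq 1/2$, (b) all edges $\leq 1/2$, using monotonicity of $f$ on $[0,1/2]$, and (c) exactly one edge $\geq 1/2$, which then requires the further $1/4$ threshold inside the sub-case. You instead split once, at $1/4$, into $|S_H| \in \{0, 1, \geq 2\}$, which eliminates the two-level nesting and the appeal to monotonicity entirely; all three branches resolve with the two facts $f \geq 1/2$ on $[1/4,1]$ and $f = 2x$ on $[0,1/4)$. Your observation that the symmetry $f(x) + f(1-x) = 1$ drives the tight case is a worthwhile structural insight that the paper leaves implicit. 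Both arguments are sound; yours is the more economical statement of the same underlying bound.
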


\begin{proof}
Let $B' \subseteq B(T)$ denote the set of back edges that covers a
particular tree cut.  If $B'$ consists of a single edge with $x$-value
1, then the $f$-value of this edge will also be 1.  Let us now suppose
the set $B'$ contains multiple edges, whose total $x$-value is at least
1.  Consider the following three cases: First, suppose $B'$ contains at
least two edges with $x$-value at least $1/2$.  In this case, the
$f$-value on each of these edges remains at least $1/2$.  Second, if
the set $B'$ contains only edges that have $x$-value at most $1/2$,
then the total $f$-value is at least the total $x$-value, since the
$f$-value does not decrease in this case.

The third case is when $B'$ contains only one edge with $x$-value at
least $1/2$.  Suppose that this edge $e$ has value $x_e=1-\gamma \geq
1/2$.  The remaining edges in $B'$ must have total $x$-value at least
$\gamma$.  If at least one of these edges' $x$-value is at least $1/4$,
then we are done (because this edge will have $f$-value $1/2$).  Thus,
all the edges in the set $B'\setminus{e}$ must have $x$-value less
than $1/4$.  In this case, the total $f$-value for these edges is at
least $2\gamma$.  Note that the $f$-value of edge $e$ is at least
$1-2\gamma$.
\end{proof}

\begin{definition}
For each vertex $j \in T_{exp}$, let $c_f(j) = \sum_{i:(i,j) \in B(T)}
f(i,j) - 1 - \eps(j)$.
\end{definition}

For ease of notation, set $f'(i,j) := f(i,j)$ for all $(i,j) \in B(T)$.

\begin{lemma}\label{lem:sat-v}
For an LP-unsatisfied vertex $v \in T_{int}$, if we increase by the
amount $\eps(v)$ the $f'$-value of an edge that covers its unsatisfied
tree cut, then vertex $v$ will be satisfied by $f'$.
\end{lemma}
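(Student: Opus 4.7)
The plan is to reduce the statement, for both non-branch and branch LP-unsatisfied vertices, to a single numerical claim about $f$ on the covering set $S$ of the LP-unsatisfied tree cut, and then verify that claim by a case analysis echoing (but strengthening) the proof of Lemma \ref{lem:sat-by-f}. By Lemma \ref{lem:tree-cut-sat}, $v$ has at most one outgoing tree edge whose cut fails to be satisfied by $x$; any other outgoing tree cut of $v$ is LP-satisfied and hence already satisfied by $f$ via Lemma \ref{lem:sat-by-f}, so only the unsatisfied cut matters. The exact argument used in Section \ref{sec:x-circ}, applied verbatim in the non-branch and branch sub-cases, gives that the set $S \subseteq B(T)$ of back edges covering this cut satisfies $x(S) \geq 1 - \eps(v)/2$. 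I will show this implies $f(S) \geq 1 - \eps(v)$; once that is in hand, raising the $f'$-value of a single edge $e \in S$ by $\eps(v)$ (capped at $1$) suffices: if the cap is not reached, $f'(S) = f(S) + \eps(v) \geq 1$, and otherwise $f'(S) = f(S) - f(e) + 1 \geq 1$ since $f(S) \geq f(e)$.

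Set $\delta = \eps(v)/2$ and split on the number of edges in $S$ whose $x$-value is at least $1/2$. If at least two such edges exist, each contributes $f$-value at least $1/2$ by the rules in Figure \ref{fig:rules}, so $f(S) \geq 1$. If no such edge exists, then every $e \in S$ satisfies $f_e \geq x_e$ (when $x_e < 1/4$, $f_e = 2 x_e \geq x_e$, and when $x_e \in [1/4, 1/2]$, $f_e = 1/2 \geq x_e$), so $f(S) \geq x(S) \geq 1 - \delta$. The remaining case is exactly one edge $e \in S$ with $x_e \geq 1/2$; writing $\gamma = 1 - x_e \in [0, 1/2]$, the rules give $f_e \geq 1 - 2\gamma$ uniformly, since $f_e = 2 x_e - 1 = 1 - 2\gamma$ when $x_e > 3/4$ and $f_e = 1/2 \geq 1 - 2\gamma$ when $\gamma \geq 1/4$. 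If any other edge of $S$ lies in $[1/4, 3/4]$, its $f$-value is $1/2$ and combined with $f_e \geq 1/2$ yields $f(S) \geq 1$. Otherwise every edge in $S \setminus \{e\}$ has $x$-value below $1/4$, so $f(S \setminus \{e\}) = 2\, x(S \setminus \{e\}) \geq 2\max(0, \gamma - \delta)$, and adding $f_e \geq 1 - 2\gamma$ gives $f(S) \geq 1 - 2\delta = 1 - \eps(v)$ in both sub-cases $\gamma \geq \delta$ and $\gamma < \delta$.

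\textbf{Main obstacle.} The delicate point is the last sub-case: the rule in Figure \ref{fig:rules} penalizes each ``large'' back edge by up to $2(1-x_e) = 2\gamma$ relative to its $x$-value, and this loss has to be absorbed either by the doubled mass $2x$ on the ``small'' back edges in $S$ or by the $\eps(v)$-slack furnished by $v$'s LP-excess. The coupling between the $1/4$-$3/4$ thresholds chosen in the $f$-rules and the factor $1/2$ in the inequality $x(S) \geq 1 - \eps(v)/2$ is exactly what makes these two deficits match, which is why a single $\eps(v)$-bump on one covering back edge suffices to satisfy $v$.
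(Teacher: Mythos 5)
Your proof is correct and follows the same high-level route as the paper's: reduce to a single unsatisfied tree cut via Lemma~\ref{lem:tree-cut-sat} (and Lemma~\ref{lem:sat-by-f} for the other outgoing cuts at a branch vertex), invoke the Section~\ref{sec:x-circ} cut argument to get $x(S) \geq 1 - \eps(v)/2$, and then show a single $\eps(v)$-bump on one edge of $S$ suffices. Where you go further than the paper is in actually proving the numerical inequality $f(S) \geq 1 - \eps(v)$. The paper's proof, after writing $x(S) = 1 - \gamma$ and deducing $\eps(j) \geq 2\gamma$, simply asserts that ``the $f$-value of the edges in $S$ is at least $1 - 2\gamma$'' without argument; your three-way split on the number of edges of $S$ with $x$-value at least $1/2$ (mirroring Lemma~\ref{lem:sat-by-f} but tracking the deficit $\delta = \eps(v)/2$ through the third sub-case where all edges of $S \setminus \{e\}$ fall below $1/4$) is exactly the verification that makes the paper's assertion rigorous. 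You also explicitly handle the $\min\{1,\cdot\}$ cap on the updated $f'$-value, which the paper leaves implicit. So this is the same approach, but you fill in a genuine (if small) gap in the paper's exposition rather than merely restating it.
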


\begin{proof}
We will argue, as we did in Section \ref{sec:x-circ}, that each
LP-unsatisfied vertex $v \in T_{int}\setminus{T_{exp}}$ can be
satisfied by increasing the $f'$-value of a single back edge that
covers the unsatisfied tree cut corresponding to one of its outgoing
tree edges, $(v,t)$.

Let $B' \subseteq B(T)$ be the set of back edges that cover the tree
cut corresponding to edge $(v,t)$.  Partition the edges in $\delta(v)$
into the following two sets: the first set $E_1$ consists of edge
$(v,t)$ and the incoming back edges from the subtree rooted at $t$.
Let $E_2$ denote the remaining edges in $\delta(v)$.  Then we have
\begin{eqnarray*}
x(E_1) + x(B') & \geq & 2,\\
x(E_2) + x(B') & \geq & 2.
\end{eqnarray*}
Since $v$ is LP-unsatisfied, $x(B') = 1-\gamma < 1$.  Thus,
\begin{eqnarray*}
x(E_1) + x(E_2) & \geq & 2 + 2\gamma.
\end{eqnarray*}
So $\eps(v) \geq 2\gamma$.  Therefore, since the $f$-value of the edges
in $B'$ is at least $1-2\gamma$, the amount $2\gamma$ is sufficient to
``correct'' the $f$-values so that $v$ is satisfied by
$f'$.
\end{proof}

The following lemma follows by the construction of the $f'$-values and
by Corollary \ref{lem:expensive}.

\begin{lemma}\label{lem:fix-f}
The cost of satisfying all of the LP-unsatisfied vertices is at most
$\sum_{j \in T_{int}\setminus{T_{exp}}} \eps(j)$.  In other words,
\begin{eqnarray*}
\sum_{(u,v) \in B(T)} (f'(u,v) - f(u,v)) & \leq & \sum_{j \in
  T_{int}\setminus{T_{exp}}} \eps(j).
\end{eqnarray*}
\end{lemma}

\begin{lemma}\label{lem:four-five}
For a vertex $j \in T_{exp}$, $c_f(j)  \leq  \frac{1}{4}.$
\end{lemma}

\begin{proof}
We break the proof into the following claims.
\begin{claim}\label{clm:end1}
For $j \in T_{exp}$, if $\xij \geq 1/2$ or if $\xaj \leq 3/4$, then
$c_f(j) \leq 0$.
\end{claim}

\begin{cproof}
We consider the following three cases.

\vspace{2mm}
\noindent
{\bf Case (i)} First, we consider the case in which $\xij \geq 3/4$.
Then, the total $f$-value of the back edges coming into vertex $j$ is
\begin{eqnarray*}
c_f(j) + 1 + \eps(j)  & = & 2 \cdot \xaj -1 + 2 \cdot \xij -
1\label{incoming} \\
& \leq & (2 + \eps(j) - \xij) + 2 \cdot \xij -2\\
& = & \xij + \eps(j).
\end{eqnarray*}
The inequality follows from Claim \ref{clm:fact}.
This implies that
\begin{eqnarray*}
c_f(j) & \leq &  \xij - 1 ~ \leq ~ 0.
\end{eqnarray*}

\vspace{2mm}
\noindent
{\bf Case (ii)} Now let us consider the case when $\xaj \geq 3/4$ and
$1/2 \leq \xij \leq 3/4$.  The total $f$-value of the incoming back
edges is
\begin{eqnarray*}
c_f(j) + 1 + \eps(j) ~ = ~ 2\cdot \xaj - 1 + \frac{1}{2} & = & 2 \cdot
\xaj - \frac{1}{2} \\ 
& \leq & 3/2 - \xij + \eps(j).
\end{eqnarray*}
The inequality follows from Claim \ref{clm:fact}.
This implies that
\begin{eqnarray*}
c_f(j) & \leq & 1/2 - \xij.
\end{eqnarray*}
Since $\xij \geq 1/2$, this implies that $c_f(j) \leq 0$.

\vspace{2mm}
\noindent
{\bf Case (iii)} Now let us consider the case when $\xaj \leq
3/4$.  Note that in this case, the $f$-value for each incoming back
edge is at most $1/2$.  Thus, $c_f(j) \leq 0$.
\end{cproof}

It remains to examine the case when $\xaj > 3/4$ and $0 < \xij \leq
1/2$.  This is the only situation when $c_f(j)$ can be positive.

\begin{claim}\label{clm:end2}
If $\xaj \geq 3/4$ and $0 < \xij \leq 1/2$, then $c_f(j) \leq
\min\{\xij, ~1/2-\xij \}$.
\end{claim}

\begin{cproof}
\vspace{2mm}
\noindent
{\bf Case (iv)} Now let us consider the case when $\xaj \geq 3/4$ and
$1/4 \leq \xij < 1/2$.  Applying Claim \ref{clm:fact}, we see that 
the total $f$-value of the incoming back edges
is
\begin{eqnarray*}
c_f(j) + 1 + \eps(j) ~ = ~ 2\cdot \xaj  - 1 + \frac{1}{2} & \leq &
\frac{3}{2} - \xij + \eps(j).
\end{eqnarray*}
Therefore,
\begin{eqnarray*}
c_f(j) & \leq & \frac{1}{2} - \xij ~ \leq ~ \xij.
\end{eqnarray*}

\vspace{2mm}
\noindent
{\bf Case (v)} Now let us consider the case when $\xaj \geq 3/4$ and
$0 < \xij < 1/4$.  Applying Claim \ref{clm:fact}, the
total $f$-value of the incoming back edges is
\begin{eqnarray*}
c_f(j) + 1 + \eps(j) ~ = ~ 2\cdot \xaj  - 1 + 2 \cdot \xij  & \leq &
\xij + 1 + \eps(j).
\end{eqnarray*}
Therefore,
\begin{eqnarray*}
c_f(j) & \leq & \xij ~ \leq ~ \frac{1}{2} - \xij.
\end{eqnarray*}
\end{cproof}
Claims \ref{clm:end1} and \ref{clm:end2} show that $c_f(j) \leq
\min\{x_{min}(j), 1/2 - x_{min}(j)\}$ for $0 \leq x_{min}(j) \leq 1/2$
and $c_f(j) \leq 0$ otherwise.  Thus, $c_f(j) \leq 1/4$.\end{proof}

We can now prove Theorem \ref{thm:f}.

\noindent
{{\it{Proof of Theorem \ref{thm:f}.}}}
We show that the function $f': B(T) \rightarrow [0,1]$ corresponds to a feasible
circulation of $C(G,T)$ with cost (not including the root, which adds
2) at most
\begin{eqnarray}
\sum_{j \in T_{exp}} \max\{0,~c_f(j)\} + \sum_{j \in T_{int}}\eps(j).\label{within-thm3}
\end{eqnarray}
By construction, every vertex in $T_{int}$ is satisfied by $f'$.
Thus, the $f'$-values correspond to a feasible circulation of
$C(G,T)$.  The cost of the circulation based on the $f'$-values is
\begin{eqnarray*}
\sum_{j \in T_{exp}} \max \left\{0, \left(\sum_{i: (i,j) \in B(T)}
f'(i,j)\right) - 1 \right\} & \leq & \sum_{j \in T_{exp}} \max
\left\{0, \left(\sum_{i: (i,j) \in B(T)} f(i,j)\right) - 1 \right\}
\\ 
&& + \sum_{(u,v) \in B(T)} (f'(u,v) - f(u,v)).
\end{eqnarray*}
We have
\begin{eqnarray*}
\sum_{j \in T_{exp}} \max \left\{0, \left(\sum_{i: (i,j) \in B(T)}
f(i,j)\right) - 1 \right\} & 
\leq & \sum_{j \in T_{exp}} \max\{0,~c_f(j)\} + \sum_{j \in
  T_{exp}}\eps(j).
\end{eqnarray*}
Combining the above inequality with Lemma \ref{lem:fix-f} proves
that the cost of the circulation corresponding to $f'$ is at most \eqref{within-thm3}.
By Lemma \ref{lem:four-five}, we have
\begin{eqnarray*}
\sum_{j \in T_{exp}} \max\{0, c_f(j)\}  + \sum_{j \in T_{int}}
\eps(j) & \leq &  |T_{exp}| \cdot \frac{1}{4} + \sum_{j \in T_{int}} \eps(j)\\
& \leq & \frac{n}{8} + 2\eps n,
\end{eqnarray*}
where the last inequality follows from Fact \ref{exp:bound}.
\qed

\subsection{Combining the $x$- and the $f$-circulations}\label{sec:combine}

We can classify each vertex $j$ in $T_{exp}$ according to the value of
$\xij$.  Intuitively, if many vertices contribute a lot, say $1/3$ to
the $x$-circulation, then they will not contribute a lot to the
$f$-circulation, and vice versa.  The entries in the table below
follow from Claims \ref{clm:end1} and
\ref{clm:end2} and Lemma \ref{clm:fourteen}.
\vspace{2mm}
\begin{center}
\fbox{
\begin{tabular}{c|c|c}
$\xij$ & $c_x(j)$ & $c_f(j)$\\ \hline
$[0,1/4]$ & $\xij/2 $ & $\xij$ \\                
$[1/4,1/2]$ & $\xij/2$ & $1/2-\xij$ \\
$[1/2,1]$ & $\leq 1/3$ & 0 
\end{tabular}
}
\end{center}

\begin{theorem}\label{thm:f2}
When $G$ is subquartic and \optLP $= (1+\eps)n$, there is a
feasible circulation for $C(G,T)$ with cost at most $|B(T)|/11 + 2\eps
n \leq n/11 + 2\eps n + 2$.
\end{theorem}

\begin{proof}
We can compute the cost of the $x$-circulation and the cost of the
$f$-circulation for $C(G,T)$.  We will show that the minimum of the
two costs is upper bounded by the guarantee in the theorem.
In other words, our goal is to show
that for some $\alpha \in [0,1]$, the
following inequality holds:
\begin{eqnarray}
\alpha \sum_{j \in T_{int}} c_x(j) + (1 - \alpha)
\sum_{j \in T_{int}}c_f(j) & \leq & \frac{n}{11}.\label{inequality:alpha}
\end{eqnarray}
Let $\beta \in [0,1]$ represent the fraction of vertices in $T_{exp}$
for which $\xij \in [0,1/2]$.  We may assume without loss of
generality that $x_{min}(j) \geq \frac{1}{4}$ for all $j$; if
$x_{min}(j) < \frac{1}{4}$, we can substitute $x_{min}(j)$ with $1/2 -
x_{min}(j)$, which will preserve the value of $c_f(j)$ and increase
the value of $c_x(j)$ (i.e., we only make the left-hand side of
\eqref{inequality:alpha} higher).  Let $(1-\beta)$ be the remaining
fraction of the vertices, for which $\xij \in (1/2,1]$.

Let $\bar{x}_{min}$ denote the average value of $\xij$ for the
$\beta$-fraction of the vertices in $T_{exp}$ with $\xij \in
[1/4,1/2]$.  Note that $\beta \cdot \bar{x}_{min}/2$ is the average
contribution of these vertices to the $x$-circulation and that $\beta
(1/2 - \bar{x}_{min})$ is the average contribution of these vertices
to the $f$-circulation.  We can take the following convex combination
of the $x$- and $f$-circulations to obtain the following inequality:
\begin{eqnarray}
\frac{6}{11} \left(\beta \cdot \frac{\bar{x}_{min}}{2} + (1-\beta)
\cdot \frac{1}{3}\right) + \frac{5}{11}\left(\beta(\frac{1}{2} -
\bar{x}_{min}) \right) & \leq & \frac{2}{11}.\label{eq:balance}
\end{eqnarray}
Since \eqref{eq:balance} holds when $\bar{x}_{min} \geq 1/4$,
and $\xij \in [1/4,1]$ by assumption, we can conclude that the average
contribution of a vertex in $T_{exp}$ to the circulation is at most
$2/11$.  By Fact \ref{exp:bound}, since there are at most $|B(T)|/2$
vertices in $T_{exp}$, the worst-case cost of the circulation (not
including the contribution of the root) is $|B(T)|/11 + 2\eps n \leq
n/11 + 2\eps n$.
\end{proof}

\section{The $h$-circulation}\label{sec:h-circ}

In this section, we define a third circulation using an optimal
extreme point $x^*$ of \eqref{LP} as defined in Corollary \ref{cor:x}.
First, we fix a greedy DFS tree $T^*$ chosen with respect to $x^*$.
Then we assign each back edge in $B(T^*)$ a value of $\frac{1}{2}$,
which we will refer to as the $h$-values.  If each vertex is satisfied
by $h$, then we have found a circulation with zero cost.  Otherwise,
for each tree cut that is not satisfied by $h$, we increase the
$h$-value on the back edge covering this tree cut to 1.  Note that
these are exactly the tree cuts that are covered by a single back
edge.  More formally, we have the following definitions.
\begin{definition}
With respect to a DFS tree $T^*$, we call a tree cut {\em poor}
if it is covered by only one back edge.  Moreover, we call the
respective back
edge {\em costly}.
\end{definition}
The costly back edges are the ones with $h$-value 1.  All other back
edges have $h$-value $\frac{1}{2}$.  Then the cost of the circulation
is at most
\begin{eqnarray}
\frac{1}{2} \cdot (\text{number of costly back edges}).
\end{eqnarray}

Recall that $H(x^*)$ is the set of vertices that are heavy with
respect to $x^*$.  Let $k = |H(x^*)|$.  We define $\eps^*(v) =
x^*(\delta(v))-2$.

\begin{lemma}\label{lem:h-circ}
When $G$ is subquartic and \optLP $= (1+\eps)n$, there is a feasible
circulation for $C(G,T^*)$ with cost (not including the contribution
of the root) at most $k + 2\eps n$.
\end{lemma}

Since $|E| \leq 2n - k - 1$, observe that $|B(T^*)| \leq n - k$.  To
prove Lemma \ref{lem:h-circ}, we will relate the quantity $k$ to the
number of costly back edges.  We remark that while $x$ and $x^*$ have
the same support, the heavy sets $H(x)$ and $H(x^*)$ might not be the same
(and might not have the same cardinality).  We find it easier to
relate the number of costly back edges to the number of heavy vertices
in $H(x^*)$.  Before we prove Lemma \ref{lem:h-circ}, we show how it can
be used to obtain an improved bound on graph-TSP in subquartic graphs.

\begin{theorem}\label{thm:plus}
When $G$ is subquartic and \optLP $(1+\eps)n$, there is a feasible
circulation either for $C(G,T)$ or for $C(G,T^*)$ with cost at most $n/12 +
2\eps n + 2$.
\end{theorem}

\begin{proof}
By Theorem \ref{thm:f2}, we see that there is a circulation for
$C(G,T)$ of cost at
most $|B(T)|/11 + 2\eps n + 2$.  Note that $|B(T)| = |B(T^*)| = |E| -
(n-1) = n - k$ (see Corollary \ref{cor:x}).  By Lemma \ref{lem:h-circ}, we also have a
circulation for $C(G,T^*)$ of cost at most $k + 2 \eps n + 2$.  These two quantities are
equal when $k = n/12$.  This proves the theorem: When $k \leq
n/12$, the bound on the circulation for $C(G,T^*)$ established via Lemma
\ref{lem:h-circ} is at most $n/12 + 2 \eps n + 2$.  Alternatively, when $k
> n/12$, we have
\begin{eqnarray*} \frac{|B(T)|}{11} & \leq &  \frac{(n - k)}{11} ~ < ~ 
\frac{(n - n/12)}{11} ~ = ~ n/12,
\end{eqnarray*}
and the bound on the circulation for $C(G,T)$ established via Theorem
\ref{thm:f2} is at most $n/12 + 2 \eps n + 2$.
\end{proof}

\begin{theorem}\label{thm:final2}
The approximation guarantee of the M\"omke-Svensson algorithm on
subquartic graphs is at most $25/18$.
\end{theorem}

\begin{proof}
Applying Lemma \ref{lem:MS} and Theorem \ref{thm:plus}, we can compute
an upper bound on the cost of a TSP tour:
\begin{eqnarray*}
\frac{\frac{4n}{3} + \frac{2}{3}\left(\frac{n}{12} + 2\eps n
  \right)}{(1+\eps)n} ~ \leq ~ \frac{\frac{25}{18} + \frac{4\eps}{3}}{(1+\eps)} ~ \leq ~\frac{25}{18}.
\end{eqnarray*}
\end{proof}

It remains to prove Lemma \ref{lem:h-circ}.  Let $k_0$ denote the
number of vertices for which $0 < \eps^*(v) < 1$, and let $k_1$ denote
the number of vertices for which $\eps^*(v) \geq 1$.  Then $k = k_0 +
k_1$.  Additionally, we have
\begin{eqnarray}
2\eps n & = & \sum_{v \in V} \eps^*(v) ~ \geq ~ k_1.
\end{eqnarray}
The following lemma implies Lemma \ref{lem:h-circ}.
\begin{lemma}\label{lem:kk}
When $G$ is
subquartic and \optLP $= (1+\eps)n$, there is a feasible circulation
for $C(G,T^*)$ with cost at most $k_0 + \frac{3}{2}k_1 \leq k + k_1$.
\end{lemma}

\begin{proof}

We set up the following scheme to account for the costly back edges.
\begin{definition}
We say a vertex $j$ is {\em directly charged} for a costly back edge
$f=(u,v)$ if $(j,t) \in E(T^*)$ corresponds to a poor tree cut covered
by $f$ and vertex $j$ is charged for $f$.
\end{definition}

\begin{definition}
We say a vertex $v$ is {\em indirectly charged} for a costly back edge
$f=(u,v)$ if vertex $v$ is charged for $f$.
\end{definition}

Let $f = (u,v)$ be a costly back edge in $B(T^*)$.  
We use the following rules to charge $f$ to a heavy vertex.

\begin{enumerate}

\item If $x^*_f < 1$, then directly charge edge $f$ to some heavy vertex $j$ such that
$(j,t) \in E(T^*)$ and $(j,t)$ corresponds to a poor tree cut covered by $f$.

\item If $x^*_f \geq 1$, then indirectly charge edge $f$ to vertex $v$.

\end{enumerate}

Now we will show that for each vertex $j$, exactly one of the
following statements holds, which shows 
that $C(G,T^*)$ has a circulation cost of at most $k + k_1$.
\begin{enumerate}[(i)]

\item Vertex $j$ is not charged.

\item Vertex $j$ is charged once or twice and $\eps^*(j) > 0$.

\item Vertex $j$ is charged three times and $\eps^*(j) \geq 1$.
\end{enumerate}

\begin{claim}\label{clm:h-heavy} 
If vertex $j$ is charged at least once, then $\eps^*(j) > 0$.
\end{claim}

\begin{cproof}
If $j$ is directly charged for back edge $f = (u,v)$, then $f$ is the
single back edge covering the poor tree cut corresponding to an
outgoing tree edge.  Since $x^*_f < 1$, this tree cut is not satisfied
by $x^*$.  Thus, by Lemma \ref{lem:tight-vertex}, we can conclude that 
$j \in H(x^*)$, which implies $x^*(j) > 2$ and
$\eps^*(j) > 0$.  If $j$ is indirectly charged, then $j$ is incident
to at least two edges with $x^*$-value at least 1, and so $\eps^*(j) >
0$.\end{cproof}

\begin{claim}\label{clm:threetimes}
A vertex $j$ is charged at most three times.
\end{claim}

\begin{cproof}
A vertex $j$ can be directly charged at most once for each outgoing
tree edge and can be indirectly charged for each incoming back edge.
In a subquartic graph, the number of outgoing tree edges plus the
number of incoming back edges is at most three.
\end{cproof}

\begin{claim}\label{clm:threetimes-large-eps}
If a vertex $j$ is charged three times, then $\eps^*(j) \geq 1$.
\end{claim}

\begin{cproof}
If $j$ is not a branch vertex, then the only way that $j$ can be
charged three times is if $j$ is indirectly charged twice and directly
charged once.  If a vertex $j$ is indirectly charged twice, it must
have two incoming back edges with $x^*$-value at least 1.  Thus,
$\eps^*(j) > 1$.

Now consider the case in which vertex $j$ is a branch vertex with
three outgoing tree edges.  Let $e_1, e_2$ and $e_3$ denote three
edges in $E(T^*)$ outgoing from vertex $j$.  Since $j$ cannot be
indirectly charged in this case, it must be directly charged three
times.  Let $f_i \in B(T^*)$ denote the lone back edge covering the
tree cut corresponding to edge $e_i$.  Then we have $x^*_{f_i} +
x^*_{e_i} \geq 2$ (since $e_i$ and $f_i$ cross a cut in $G$) and
$x^*_{f_i} < 1$ implying $x^*_{e_i} > 1$ for $i \in \{1,2,3\}$.  It
follows that $\eps^*(j) > 1$.

The last case to consider is when vertex $j$ is a branch vertex with
two outgoing tree edges $e_1$ and $e_2$.  In this case, if $j$ is
charged three times, then it is directly charged twice and indirectly
charged once.  Assume that the back edge $b$ coming into vertex $j$
comes from the subtree rooted at $t_1$, where $e_1 = (j,t_1)$.  Then
vertex $j$ is indirectly charged for $b$, which implies that $x^*_b
\geq 1$ and $x^*_{e_1} \geq 1$ (since $T^*$ is a greedy DFS tree).
Suppose $f_2$ is the back edge covering the poor tree cut
corresponding to $e_2$.  Then $x^*_{f_2} + x^*_{e_2} \geq 2$ (since
$e_2$ and $f_2$ cross a cut in $G$) and $x^*_{f_2} < 1$, implying that
$x^*_{e_2} > 1$.  Therefore, $x^*_{e_2} + x^*_{e_1} + x^*_b > 3$,
which implies $\eps^*(j) > 1$.
\end{cproof}
This concludes the proof of Lemma \ref{lem:kk}.
\end{proof}

\section{Conclusions}

It is possible that the M\"omke-Svensson algorithm or some close
variant is a $\frac{4}{3}$-approximation algorithm for graph-TSP.  It
is also possible that some graph with maximum degree five could
demonstrate that the approximation ratio of $\frac{13}{9}$ is tight,
although we believe this to be unlikely.  A key step in the algorithm
is to find a minimum cost circulation of the network $C(G,T)$, which
is an easy problem (i.e., it can be solved efficiently), but the key
difficulty is to relate the minimum cost of this circulation to the
minimum cost of a tour.

Previous work used solutions for \eqref{LP} to bound the minimum cost of
a circulation.  In this paper, we explored two new approaches to
analyze this cost.  In the first approach, we showed how to round the
LP values resulting in an improved circulation cost in subquartic
graphs.  Our approach does not immediately extend to general graphs or
even to graphs with maximum degree five, because we try to find a
circulation with values on the back edges close to $1/2$, whereas this
is the worst case for graphs with maximum degree five as shown by
Mucha~\cite{mucha2014frac}.  However, perhaps a different scheme for
modifying the LP values could be designed for such instances.

The second approach is based on extreme point structure: an extreme
point solution for \eqref{LP} with $k$ heavy vertices has at most $n-k$
back edges with respect to any DFS tree.  This is true for general
graphs, and having fewer back edges leads directly to an improved bound
on the circulation cost via the analysis of Mucha.  In the case of
subquartic graphs, we are able to balance this with another analysis
that bounds the cost of a circulation in terms of $k$.  In general
graphs, this latter step may also be possible, but it seems that new
ideas are needed.

Finally, we remark that besides the flexibility in analyzing the
circulation cost, which we have explored in this paper, there is also
the possibility to choose the DFS tree more strategically in order to
obtain an improved analysis of the algorithm.

\section*{Acknowledgements}
We wish to thank Sylvia Boyd, Satoru Iwata, R. Ravi, Andr\'as Seb\H{o}
and Ola Svensson for helpful discussions and comments.  We also thank
the anonymous referees for their detailed remarks that substantially
improved the presentation of the paper.  This work was done
in part while the author was a member of the THL2 group at EPFL.

\bibliography{tsp}
\end{document}